\documentclass[12pt]{article}
\usepackage{mathrsfs}
\usepackage{graphicx}
\usepackage{enumerate}
\usepackage{amsfonts,amsmath,amssymb}
\usepackage{geometry}
\usepackage{float} 

\newcommand{\n}{\noindent}

\def\XXint#1#2#3{{\setbox0=\hbox{$#1{#2#3}{\int}$}
		\vcenter{\hbox{$#2#3$}}\kern-.5\wd0}}

\newtheorem{oldtheorem}{Theorem}[section]
\newtheorem{oldassertion}[oldtheorem]{Assertion}
\newtheorem{oldproposition}[oldtheorem]{Proposition}
\newtheorem{oldremark}[oldtheorem]{Remark}
\newtheorem{oldlemma}[oldtheorem]{Lemma}
\newtheorem{olddefinition}[oldtheorem]{Definition}
\newtheorem{oldclaim}[oldtheorem]{Claim}
\newtheorem{oldcorollary}[oldtheorem]{Corollary}

\newenvironment{theorem}{\begin{oldtheorem}$\!\!\!${\bf.}}{\end{oldtheorem}}

\newenvironment{lemma}{\begin{oldlemma}$\!\!\!${\bf.}}{\end{oldlemma}}

\newbox\qedbox

\newenvironment{proof}{\smallskip\noindent{\bf Proof.}\hskip \labelsep}%
{\hfill\penalty10000\copy\qedbox\par\medskip}

\newenvironment{acknowledgements}{\smallskip\noindent{\bf Acknowledgements.}
	\hskip\labelsep}{}

\providecommand{\keywords}[1]{\textbf{\textbf{Keywords---}} #1}

\title{Existence of Ground State and Excited Spinning $Q$-Vortex Solitons on Finite Domains}

\author{
	Caroline Brumelot\\Department of Industrial Engineering and Operations Research\\ Columbia University\\New York, NY 10027, USA\\
	\\
	Luciano Medina\footnote{lmedina2@albany.edu}\\Department of Mathematics and Statistics\\ Computational Physics Program, Department of Physics\\ State University of New York at Albany\\ Albany, NY 12222, USA}\author{
    Caroline Brumelot \\
    Department of Industrial Engineering and Operations Research \\
    Columbia University \\
    New York, NY 10027, USA
    \and
    Luciano Medina\footnote{Email address: lmedina2@albany.edu} \\
    Department of Mathematics and Statistics \\
    Computational Physics Program, Department of Physics \\
    State University of New York at Albany \\
    Albany, NY 12222, USA
}

\date{}

\begin{document}
	\maketitle
\begin{abstract}
    We establish the existence of spinning $Q$-vortex solitons in a complex scalar field theory with a sextic potential on a finite domain. By reducing the governing equation to a nonlinear boundary value problem, we use variational methods to prove the existence of at least two distinct types of solutions: a ground state solution obtained via constrained minimization and an excited state of the saddle-point type obtained via the Mountain Pass Theorem. We derive bounds for the angular frequency $\omega$, the wave amplitude, and the domain size $P$, and provide explicit estimates for the exponential decay of the solutions. Furthermore, we implement a spectral-Galerkin formulation to numerically compute the profiles of fundamental $Q$-vortices, illustrating the saturation behavior of the soliton's amplitude and the asymptotic dependence of the frequency on a prescribed reduced norm and vortex winding number, as well as verifying the theoretical results and visualizing the topological phase structure of the solutions.
\end{abstract}
	
	\n {\bf 2020 Mathematics Subject Classification.} 35J20, 35Q55, 35Q60, 65N30.
	
	\medskip
	\keywords{Spinning $Q$-vortices, Optical Solitons, Variational Methods, Constrained Minimization, Mountain Pass Theorem, Sextic Potential, Finite Domain}

	\section{Introduction}
	\setcounter{equation}{0}

In the context of field theoretic models of high energy physics, solitons are localized finite energy solutions of nonlinear field equations and realizations of elementary particles. Solitons, in most nonlinear field theories, possess strong regularity properties but typically lack a nonzero angular momentum, which is a desired characteristic of elementary particles \cite{Derrick1964, Mandal2022, VW}. There are two classes of solitons: topological, having a conserved topological charge, and nontopological, having a conserved Noether charge \cite{Coleman, Hamada2024, KKK, Tai2024}. Examples of topological and nontopological solitons include monopoles, vortices, domain walls, soliton stars, boson stars, hadron structures, and of our particular interest, $Q$-balls \cite{Ansari2024, Birse, BHZ, BCHL, DW, Kusenko, LeePang, Ivashkin2025}.

From a mathematical perspective, $Q$-balls are examples of the class of hylomorphic solitons and their existence has been well studied by Benci and others \cite{Benci3,Benci4,Benci5}. While the work of Volkov and W\"ohnert \cite{VW} presented the first numerical evidence of static spinning $Q$-balls, the field has seen significant recent activity. Notably, Almumin et al. \cite{Almumin2024} recently investigated slowly rotating $Q$-balls, confirming the existence of metastable states with small angular momentum. Furthermore, studies by Zhang et al. \cite{Zhang2025} have explored superradiance in rotating solitons, linking the ``spinning" nature of these solutions to wider stability phenomena. Beyond the global $U(1)$ symmetry, significant progress has also been made in the study of gauged or electrically charged $Q$-balls. Recent works have established the existence of such charged solutions in multi-scalar theories \cite{Brihaye2024}, proven their existence in coupled Abelian-Higgs models \cite{SuHan2025}, and explored their complex dynamical interactions, where electromagnetic repulsion plays a critical role \cite{GaugedCollisions2024}.

Our aim is to establish a rigorous mathematical existence theory for spinning $Q$-vortices-generalizations of $Q$-balls in $2+1$ dimensions analyzed numerically by Volkov and W\"ohnert \cite{VW} and to use our theory to study their profiles. To this end, following the formulation in \cite{VW}, consider the theory of a complex scalar field in 3+1 dimensional flat Minkowski spacetime defined by the Lagrangian density
\begin{align}
    \mathcal{L}=\partial_{\mu}\Phi\partial^{\mu}\Phi^*-U(|\Phi|).\label{theory}
\end{align}
The global $U(1)$ symmetry of the Lagrangian under $\Phi \to e^{i\alpha}\Phi$ leads to the conserved Noether charge
\begin{align}
    Q = \frac{1}{i} \int (\Phi^* \dot{\Phi} - \Phi \dot{\Phi}^*) d^3x. \label{NoetherCharge}
\end{align}
The variation of the action with respect to the field $\Phi^*$ yields the general Euler-Lagrange equation of motion:
\begin{align}
    \partial_{\mu}\partial^{\mu}\Phi + \frac{\partial U}{\partial \Phi^*} = 0. \label{fieldEq}
\end{align}
Assuming cylindrical symmetry where the field is independent of $z$, and employing polar coordinates $(t,\rho,z,\theta)$, we seek spinning $Q$-vortices via the ansatz
\begin{align}
    \Phi=\phi(\rho)e^{i\omega t+iN\theta},\label{ansazts}
\end{align}
where $\omega\in\mathbb{R}$ is the wave frequency and $N\in\mathbb{Z}$ is the rotational quantum number or vortex number.

The energy per unit vortex length is
\begin{align}
    E=2\pi\int_0^{\infty}\rho\left(\omega^2\phi^2+\phi_{\rho}^2+\dfrac{N^2}{\rho^2}\phi^2+U(\phi)\right)d\rho,\label{energy1}
\end{align}
and the angular momentum per unit length is given by
\begin{align}
    J=4\pi\omega N\int_0^{\infty}\rho\phi^2d\rho\equiv NQ.
\end{align}
When $N\neq 0$, the vortex rotates around the $z$-axis with angular momentum $J$.

From \eqref{energy1}, we see that finite energy solutions on the unbounded domain require $\phi(0)=0$ and decay $\phi(\rho)\rightarrow 0$ as $\rho\rightarrow\infty$. To rigorously establish existence and facilitate numerical computation, we study the problem on a finite domain $D_P = \{ x \in \mathbb{R}^2 : |x| < P \}$, imposing the Dirichlet boundary condition $\phi(P)=0$ for some large $P>0$. This formulation serves dual purposes: it models physical confinement and acts as a computational proxy for the free-space solution, consistent with the numerical approach of Volkov and W\"ohnert \cite{VW}. We establish a priori bounds and decay estimates in Section 2 to quantify the behavior of these solutions as $\rho \to P$.

Under the explicit potential
\begin{align}
    U(\phi)=\lambda\left(\phi^6-a\phi^4+b\phi^2\right), \qquad \lambda,a,b>0,\qquad b>a^2/4,
\end{align}
the existence of $Q$-vortices reduces to the two-point boundary value problem:
\begin{align}
    \left\{\begin{array}{cc}
        &\phi_{\rho\rho}+\dfrac{1}{\rho}\phi_{\rho}-\dfrac{N^2}{\rho^2}\phi-\lambda(6\phi^5-4a\phi^3+2b\phi)+\omega^2\phi=0, \quad\rho\in(0,P),\label{DE}\\
        &\phi(0)=0=\phi(P).
    \end{array}\right.
\end{align}

In our mathematical analysis, it is convenient to distinguish between the reduced norm functional $\tilde{Q}(\phi)$, defined by:
\begin{align}
    \tilde{Q}(\phi):=4\pi\int_0^P\rho\phi^2d\rho,\label{ConstraintFunctional1}
\end{align}
and the constraint equation $\tilde{Q}(\phi)=\tilde{Q}_0$. We use the term prescribed norm for the constant $\tilde{Q}_0$, which serves as the constraint value in our minimization problem. In the context of nonlinear optics, this quantity is often referred to as the beam power, while in particle physics it represents the static geometric contribution to the Noether charge. This distinction aligns with recent rigorous studies on the existence of optical vortices \cite{ChenSu2022, Guo2019, Greco2016, Medina2017, Medina2021, Medina2023}, where the beam power is prescribed rather than the frequency.

We now state our main analytical results governing solutions to \eqref{DE}, followed by a summary of our numerical findings.

\begin{theorem}
    Let $\phi\in\mathcal{C}[0,P]\cap\mathcal{C}^2(0,P)$ be a nontrivial classical solution of \eqref{DE}. A necessary condition for existence is
    \begin{equation}
        2\lambda\left(b - \frac{a^2}{3}\right) + \frac{N^2}{ P^2} < \omega^2.
        \label{OmegaLowerBnd}\end{equation}
    Furthermore, if $\omega^2 < 2\lambda b + \frac{N^2}{P^2}$, the solution is uniformly bounded by
    \begin{equation}
        \phi^2(\rho) < \frac{2a}{3} \quad \text{for all } \rho \in [0, P].
    \end{equation}
    Additionally, under the same condition on $\omega$, the solution decays exponentially fast near the boundary:
    \begin{equation}
        \phi^2(\rho)\leq \frac{2a}{3}\exp(-\sigma(\rho-P_0))\qquad\text{for every $\rho\in[P_0,P]$,}
    \end{equation}
    with $P_0\in(0,P)$ sufficiently large and decay rate $\sigma = \sqrt{\frac{N^2}{P^2}+2\lambda b-\omega^2}$.
\end{theorem}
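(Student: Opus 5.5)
The plan is to rewrite \eqref{DE} in divergence form and use maximum-principle arguments at interior extremum points. This avoids the boundary terms at $\rho=0$ and $\rho=P$, which are delicate because we only assume $\phi\in\mathcal{C}[0,P]\cap\mathcal{C}^2(0,P)$. Set
\[
c(\rho):=\frac{N^2}{\rho^2}+2\lambda\left(3\phi^4-2a\phi^2+b\right)-\omega^2 ,
\]
so that \eqref{DE} reads $\phi_{\rho\rho}+\frac{1}{\rho}\phi_\rho=c(\rho)\phi$ on $(0,P)$. The equation is odd in $\phi$, so $-\phi$ is also a solution. We may therefore assume, replacing $\phi$ by $-\phi$ if needed, that $\phi$ attains a positive maximum at some $\rho_0\in(0,P)$; this uses $\phi(0)=\phi(P)=0$ and $\phi\not\equiv0$. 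At $\rho_0$ we have $\phi_\rho=0$ and $\phi_{\rho\rho}\le0$, hence $c(\rho_0)\phi(\rho_0)\le 0$, i.e.\ $c(\rho_0)\le0$. Everything below is extracted from this single inequality, applied both to $\phi$ and to $-\phi$.

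\emph{Lower bound on $\omega$.} Since $3s^2-2as+b\ge b-a^2/3$ for all $s$ (the minimum is at $s=a/3$) and $N^2/\rho_0^2\ge N^2/P^2$, the condition $c(\rho_0)\le0$ gives $2\lambda(b-a^2/3)+N^2/P^2\le\omega^2$. To get strict inequality I would treat the equality case separately. Equality forces $c\ge0$ on all of $(0,P)$ with $c(\rho_0)=0$. If $N\neq0$ this is already impossible, since $N^2/\rho_0^2>N^2/P^2$. If $N=0$, then $\phi$ is a nonnegative-coefficient subsolution near its positive maximum: $(\rho\phi_\rho)_\rho=\rho c\phi\ge0$ wherever $\phi>0$. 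Then $\rho\phi_\rho$ is nondecreasing on the component of $\{\phi>0\}$ containing $\rho_0$, and it vanishes at $\rho_0$. Hence $\phi$ is nondecreasing to the right of $\rho_0$, so $\phi\ge\phi(\rho_0)>0$ there, which contradicts $\phi(P)=0$. (This is a one-dimensional strong maximum principle.)

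\emph{Uniform bound.} Assume $\omega^2<2\lambda b+N^2/P^2$. The inequality $c(\rho_0)\le0$ together with $N^2/\rho_0^2\ge N^2/P^2$ gives $2\lambda(3\phi^4-2a\phi^2)(\rho_0)<0$, so $\phi^2(\rho_0)<2a/3$. Doing the same for $-\phi$ at its positive maximum (when one exists) controls the minimum of $\phi$, so $\phi^2<2a/3$ on $[0,P]$.

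\emph{Exponential decay.} Let $u=\phi^2$. A direct computation gives
\[
u_{\rho\rho}+\tfrac1\rho u_\rho=2\phi_\rho^2+2c\,u\ge 2c\,u .
\]
Since $\sigma^2=N^2/P^2+2\lambda b-\omega^2>0$ and $N^2/\rho^2\ge N^2/P^2$, we have $c\ge \sigma^2-4\lambda a u$. By continuity and $u(P)=0$, we can choose $P_0<P$ close enough to $P$ (this is the ``sufficiently large'' $P_0$) that $4\lambda a u\le\sigma^2/2$ on $[P_0,P]$. Then $2c\ge\sigma^2$ there, so $u_{\rho\rho}+\frac1\rho u_\rho\ge\sigma^2u$ on $(P_0,P)$. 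For the comparison function $w=\frac{2a}{3}e^{-\sigma(\rho-P_0)}$ we have $w_{\rho\rho}+\frac1\rho w_\rho=\sigma^2w-\frac{\sigma}{\rho}w\le\sigma^2w$. Hence $v=u-w$ satisfies $v_{\rho\rho}+\frac1\rho v_\rho\ge\sigma^2 v$, with $v(P_0)<0$ by the uniform bound and $v(P)=-w(P)<0$. A positive interior maximum of $v$ would give $0\ge v_{\rho\rho}\ge\sigma^2v>0$, which is impossible. Therefore $u\le w$ on $[P_0,P]$, which is the claimed estimate.

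The main obstacle I expect is this last step: obtaining exactly the rate $\sigma$ rather than a slightly degraded one. The perturbation $-4\lambda a u$ in $c$ would naively spoil the rate. The factor $2$ coming from $\Delta(\phi^2)\ge 2c\phi^2$ is what absorbs it, at the cost of choosing $P_0$ depending on the solution. A secondary point is the strictness in \eqref{OmegaLowerBnd} when $N=0$, which I would handle by the monotonicity of $\rho\phi_\rho$ as above.
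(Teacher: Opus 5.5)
Your proposal is correct and follows the paper's proof (Lemmas 2.1--2.3) essentially step for step. It uses the interior-maximum inequality $c(\rho_0)\le 0$ for the frequency and amplitude bounds, and the subsolution inequality $\Delta(\phi^2)\ge\sigma^2\phi^2$ on $[P_0,P]$, with half of $\sigma^2$ absorbed by the factor $2$, compared against $\frac{2a}{3}e^{-\sigma(\rho-P_0)}$. Your two extra steps close small gaps in the paper's argument: the monotonicity of $\rho\phi_\rho$ gives strictness when $N=0$, where the paper's appeal to $1/\rho_0^2>1/P^2$ gives nothing, and applying the bound to $-\phi$ controls negative values of $\phi$, which the paper's ``$\phi^2\le\phi^2(\rho_0)$'' overlooks.
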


\begin{theorem}
    Let $|N|\geq 1$ and $\lambda,a,b>0$ such that $b>a^2/4$. For any $\omega$ satisfying
    \begin{equation}
        2\lambda\left(b-\dfrac{1}{4}a^2\right)<\omega^2<2\lambda b
    \end{equation}
    and $P$ sufficiently large (specifically $P > P^*$ as defined in Lemma 3.2), there exist at least two different types of positive solutions, $\phi(\rho)>0$ for every $\rho\in(0,P)$, satisfying \eqref{DE}. One solution is a minimum and the other a saddle point of the corresponding action functional. There also exist infinitely many pairs of solutions (not necessarily positive) to \eqref{DE}.
\end{theorem}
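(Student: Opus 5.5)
Set $F(\phi)=\lambda(\phi^6-a\phi^4+b\phi^2)-\omega^2\phi^2$, so that \eqref{DE} reads $\phi''+\frac1\rho\phi'-\frac{N^2}{\rho^2}\phi=\frac12F'(\phi)$. Solutions on $(0,P)$ with $\phi(0)=\phi(P)=0$ are critical points of the action
\[
I(\phi)=\int_0^P\rho\Big(\phi_\rho^2+\frac{N^2}{\rho^2}\phi^2+F(\phi)\Big)\,d\rho
\]
on the Hilbert space $X$, defined as the completion of $C_c^\infty(0,P)$ under $\|\phi\|^2=\int_0^P\rho(\phi_\rho^2+N^2\rho^{-2}\phi^2)\,d\rho$. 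Since $|N|\ge1$, $X$ embeds in $H_0^1(D_P)$ via $\phi(|x|)e^{iN\theta}$. The embedding $X\hookrightarrow L^p(\rho\,d\rho)$ is compact for every $p<\infty$. Functions in $X$ are bounded, with $\phi^2(\rho)\le C\|\phi\|^2$ by the one-dimensional inequality $\phi^2(\rho)\le 2\int|\phi\phi'|$ combined with the $N^2/\rho^2$ term. This gives $\phi(0)=0$ automatically, and it makes the sextic term harmless.

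\textbf{Key observation.} We have $F(s)=\lambda s^2\big((s^2-a/2)^2+b-a^2/4\big)-\omega^2s^2$. Under the hypothesis $\omega^2>2\lambda(b-a^2/4)$, taking $s_0^2=a/2$ gives $F(s_0)=s_0^2\big(\lambda(b-a^2/4)-\omega^2\big)<0$. Under $\omega^2<2\lambda b$, $F(s)\ge(\lambda b-\omega^2)s^2+O(s^4)$, so $F$ is positive near $0$ when $\omega^2<\lambda b$ (see the remark at the end).

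\textbf{Minimizer.} Since $F$ is bounded below by $-Cs^2$, $I$ is coercive on $X$ and weakly lower semicontinuous. Weak continuity of $\int\rho F(\phi)$ follows from the compact embedding together with the uniform bound. Hence $\inf_X I$ is attained. It is negative for $P>P^*$: choose a test function $\psi$ equal to $s_0$ on $[1,P-1]$ with unit-length linear ramps. Then the gradient and $N^2/\rho^2$ contributions are $O(P)+O(N^2\log P)$, while $\int\rho F(\psi)\sim F(s_0)P^2/2$. So $I(\psi)<0$ once $P$ is large; this is essentially the content of Lemma 3.2's $P^*$. The minimizer $u$ is nontrivial, and replacing $u$ by $|u|$ preserves $I$, so we may take $u\ge0$. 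Regularity from ODE theory and the strong maximum principle, applied to $-\Delta_N u+c(\rho)u=0$ with bounded $c=\frac12F'(u)/u$, give $u>0$ on $(0,P)$. (Alternatively one can follow the paper's constrained minimization with $\tilde Q$ fixed and $\omega^2$ as a Lagrange multiplier, but the unconstrained global minimum is cleaner for fixed $\omega$.)

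\textbf{Mountain pass solution.} Replace $F$ by $F(\phi^+)$ to force positivity. The point $0$ is a strict local minimum: by the $L^\infty$ control, $I(\phi)\ge\|\phi\|^2-C\|\phi\|^4-\omega^2\int\rho\phi^2$, and using the positivity of $F$ near $0$ this gives $I\ge\alpha>0$ on a small sphere. The minimizer $u$ satisfies $I(u)<0$, which supplies the mountain pass geometry. The Palais–Smale condition follows from coercivity (PS sequences are bounded) plus compactness of the lower-order terms. The Mountain Pass Theorem then gives a critical point $v$ with $I(v)\ge\alpha>0>I(u)$, so $v\ne u$ and $v$ is a saddle point; it is positive by the truncation and the maximum principle.

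\textbf{Infinitely many pairs.} $I$ is even, coercive, bounded below and satisfies (PS). Clark's theorem, or Lusternik–Schnirelmann genus theory, needs $I<0$ on some $k$-dimensional sphere for every $k$. Only finitely many directions can be obtained at fixed $P$ with this $F$, since $F$ is positive near $0$. Therefore I would instead use the constrained problem $\tilde Q(\phi)=\tilde Q_0$ on the sphere $S$ in $L^2(\rho\,d\rho)$. On $S\cap X$ the functional is even and bounded below and satisfies (PS), and $S$ has infinite genus. This yields infinitely many pairs $\pm\phi_k$ of critical points, each solving \eqref{DE} with some multiplier $\omega_k^2$. This matches the statement's "not necessarily positive".

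\textbf{Main obstacle.} The positivity of $F$ near $0$ holds only when $\omega^2<\lambda b$, not for all $\omega^2<2\lambda b$. With the normalisation of the equation, $F'(s)/2=\lambda(3s^5-2as^3+bs)-\omega^2 s$, and the mountain pass threshold $\omega^2<2\lambda b$ really needs the factor $2$ placed correctly. I would recheck the normalisation, taking $I=\int\rho(\phi'^2+N^2\phi^2/\rho^2+U(\phi)-\omega^2\phi^2)/2$ scaled so that the quadratic coefficient is $2\lambda b-\omega^2>0$. I would then verify that $F(s_0)<0$ holds exactly when $\omega^2>2\lambda(b-a^2/4)$, and fix $P^*$ accordingly.
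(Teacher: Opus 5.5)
Once the normalization is corrected, your argument for the two positive solutions is the paper's. It has the same ingredients: global minimization of a coercive functional, negativity via the trapezoidal test function of height $\sqrt{a/2}$ (Lemma 3.2), positivity near $0$ via the $L^\infty$ bound (Lemma 3.3), and a mountain-pass critical point at a positive level.

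As you suspect, though, the normalization is wrong as written, and the rescaling you propose at the end does not fix it. For $I=\int_0^P\rho\bigl(\phi_\rho^2+N^2\rho^{-2}\phi^2+F(\phi)\bigr)\,d\rho$ the Euler--Lagrange equation is $\phi_{\rho\rho}+\rho^{-1}\phi_\rho-N^2\rho^{-2}\phi=\tfrac12F'(\phi)$. Matching \eqref{DE} therefore forces $F=2U-\omega^2s^2$. Your $F=U-\omega^2s^2$, and equally your $\tfrac12\int\rho(\phi_\rho^2+N^2\phi^2/\rho^2+U-\omega^2\phi^2)$, put $\tfrac12U'$ in place of $U'$, so their critical points solve a different equation. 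With the correct $F$ (which is $2I_\omega$),
\[
F(s)=s^2\Bigl(2\lambda\bigl(s^2-\tfrac a2\bigr)^2+2\lambda\bigl(b-\tfrac{a^2}{4}\bigr)-\omega^2\Bigr)\ge(2\lambda b-\omega^2)s^2-2\lambda a s^4 .
\]
Hence $F(\sqrt{a/2})<0$ exactly when $\omega^2>2\lambda(b-a^2/4)$, and $F>0$ on a punctured neighbourhood of $0$ exactly when $\omega^2<2\lambda b$. That is precisely the theorem's window, so your ``main obstacle'' disappears.

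Separately, coercivity does not follow from $F\ge -Cs^2$. That bound only gives $I\ge\|\phi\|^2-C\int\rho\phi^2$, which is not coercive for large $P$, because the bottom of the spectrum of $-\Delta_N$ on $D_P$ is of order $P^{-2}$. Use instead $F\ge -K$ for a constant $K$ (the sextic term dominates), which gives $I\ge\|\phi\|^2-KP^2/2$, as in the paper.

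Your truncation of all of $F$, including $-\omega^2s^2$, is cleaner than the paper's. Since $F'(0)=0$, testing $(I^+)'(\phi)$ with $\phi^-$ gives $2\|\phi^-\|^2=0$ directly. The paper instead ends with $\int\rho(\phi^-_\rho)^2+\int\rho(N^2\rho^{-2}-\omega^2)(\phi^-)^2=0$ and appeals to coercivity of $-\Delta_N-\omega^2$. That coercivity is unavailable for large $P$, since $\omega^2>2\lambda(b-a^2/4)>0$.

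For the last sentence of the theorem, the paper's Section 3 gives no argument at all. Your genus argument on $\{\tilde Q=\tilde Q_0\}$ is valid: this set is mapped homeomorphically onto the unit sphere of $X$ by the odd radial projection. However, each $\phi_k$ carries its own multiplier. Testing \eqref{DE} with $\rho\phi_k$ and using $sU'(s)\ge 2U(s)-2\lambda a^3/27$ gives
\[
\frac{\omega_k^2\tilde Q_0}{4\pi}=\|\phi_k\|^2+\int_0^P\rho\,\phi_kU'(\phi_k)\,d\rho\ge 2c_k-\frac{\lambda a^3P^2}{27},
\]
where $c_k\to\infty$ are the Lusternik--Schnirelmann levels of \eqref{action} on the constraint set. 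So $\omega_k^2\to\infty$, and these solutions leave the window $\omega^2<2\lambda b$. What you obtain is infinitely many pairs $(\pm\phi_k,\omega_k)$, not infinitely many solutions at the fixed $\omega$ of the hypothesis; state explicitly which version you are proving.
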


\begin{theorem}
    Let $|N|\geq 1$ and $\lambda,a,b>0$ such that $b>a^2/4$. There exists a solution pair $(\phi_0,\omega_0)$ to \eqref{DE} such that $\phi_0(\rho)>0$ for all $\rho\in(0,P)$, where $\omega_0$ appears as the Lagrange multiplier of a constrained minimization problem. Additionally, if $\omega_0^2 < 2\lambda b$, the prescribed norm $\tilde{Q}_0$ must satisfy the lower bound:
    \begin{align}
        \tilde{Q}_0 > \frac{\pi |N|}{a\lambda}.
    \end{align}
\end{theorem}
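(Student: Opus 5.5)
We set this up as a constrained minimization on the weighted space of radial functions. Let $X$ be the completion of $\{\phi\in C^1[0,P]:\phi(0)=\phi(P)=0\}$ under the norm $\|\phi\|^2=\int_0^P\rho\big(\phi_\rho^2+\frac{N^2}{\rho^2}\phi^2\big)d\rho$. Since $|N|\ge1$, the function $\phi(\rho)e^{iN\theta}$ then lies in $H^1_0(D_P)$. Define
\[
I(\phi)=2\pi\int_0^P\rho\Big(\phi_\rho^2+\frac{N^2}{\rho^2}\phi^2+U(\phi)\Big)d\rho .
\]
We minimize $I$ over the constraint set $\mathcal{C}=\{\phi\in X:\tilde{Q}(\phi)=\tilde{Q}_0\}$.

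\textbf{Coercivity and existence of a minimizer.} Since $\phi^6-a\phi^4\ge -\frac{4a^3}{27}$ pointwise, the potential term is bounded below by $-C\int_0^P\rho\,d\rho$, a constant depending on $P$. Hence $I$ is bounded below on $\mathcal{C}$ and coercive in $\|\cdot\|$. Take a minimizing sequence; by replacing $\phi$ with $|\phi|$ we may assume $\phi\ge0$. The sequence is bounded in $X$. The embedding of $X$ into $L^2(\rho\,d\rho)$ is compact, because on the disk this is the Rellich embedding $H^1_0(D_P)\subset L^2$. This lets us pass to the limit in the constraint, so the weak limit $\phi_0$ stays in $\mathcal{C}$. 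The gradient part of $I$ is weakly lower semicontinuous. For the terms $\int\rho\phi^4$ and $\int\rho\phi^6$, we use the compact embedding $H^1_0(D_P)\subset L^p$ for all $p<\infty$ in two dimensions. Thus $\phi_0$ is a minimizer with $\phi_0\ge0$, $\phi_0\ne0$.

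\textbf{Euler--Lagrange equation.} The Lagrange multiplier rule gives $I'(\phi_0)=\mu\,\tilde{Q}'(\phi_0)$. Computing this produces \eqref{DE} with $\omega_0^2$ equal to a positive multiple of $\mu$. To see that $\mu>0$, pair the equation with $\phi_0$:
\[
\omega_0^2\int\rho\phi_0^2=\int\rho\Big(\phi_{0\rho}^2+\frac{N^2}{\rho^2}\phi_0^2\Big)+\lambda\int\rho(6\phi_0^6-4a\phi_0^4+2b\phi_0^2).
\]
The polynomial $6s^2-4as+2b$ is positive for $b>a^2/3$, but we only assume $b>a^2/4$. So we instead invoke Theorem 1.1: any nontrivial classical solution obeys \eqref{OmegaLowerBnd}, and its left side is positive because $b>a^2/4$ gives $2\lambda(b-a^2/3)+N^2/P^2$ a sign that must be checked. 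More simply, $\omega_0^2$ is just the real number $\mu$ times a constant, and Theorem 1.1 forces it to exceed that lower bound. If that bound is negative, one allows $\omega_0^2$ to denote the multiplier itself.

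\textbf{Regularity and positivity.} We get regularity by viewing $\phi_0 e^{iN\theta}$ as a weak solution of a semilinear elliptic equation on $D_P$ and bootstrapping. Positivity on $(0,P)$ follows from the strong maximum principle applied to the ODE: if $\phi_0\ge0$ and $\phi_0(\rho_0)=0$ at an interior point, then $\phi_{0\rho}(\rho_0)=0$, and ODE uniqueness forces $\phi_0\equiv0$.

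\textbf{The bound on $\tilde{Q}_0$.} Assume $\omega_0^2<2\lambda b$. Theorem 1.1 then gives $\phi_0^2<2a/3$ and the lower bound \eqref{OmegaLowerBnd}, but the target estimate $\tilde{Q}_0>\pi|N|/(a\lambda)$ needs a different identity; we expect this to be the main obstacle. Our plan is to test the equation against $\phi_0$ and use the Hardy-type term. Pairing gives
\[
\int\rho\frac{N^2}{\rho^2}\phi_0^2<(\omega_0^2-2\lambda b)\int\rho\phi_0^2+4a\lambda\int\rho\phi_0^4<4a\lambda\int\rho\phi_0^4 .
\]
Write $\phi_0^4=\big(\phi_0^2/\rho\big)\cdot\rho\phi_0^2$ and bound $\phi_0^2/\rho$ through the elementary inequality $\rho\phi^2(\rho)\le\int_0^P\big(\rho\phi_\rho^2+\phi^2/\rho\big)d\rho$. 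This comes from $\frac{d}{d\rho}(\phi^2)=2\phi\phi_\rho$ together with AM--GM, and it trades the $\rho^{-1}$ weight against the $N^2$ term. If the constants are tracked carefully, this should yield $|N|\cdot(\text{kinetic})\lesssim a\lambda\,\tilde{Q}_0/\pi\cdot(\text{kinetic})$. Dividing out the positive kinetic term then gives the stated bound, and we expect the factor $\pi$ to come from $\tilde{Q}=4\pi\int\rho\phi^2$.
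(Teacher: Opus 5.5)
\textbf{There is a genuine gap in the bound on $\tilde Q_0$.} Your existence argument follows the paper: the direct method on $\{\tilde Q=\tilde Q_0\}$, passing to $|\phi_j|$, Rellich compactness, the Lagrange multiplier, and ODE uniqueness for positivity. But the one quantitative claim, $\tilde Q_0>\pi|N|/(a\lambda)$, is not proved, and the route you sketch would not close as written.

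\emph{(i) You drop the gradient term.} In your displayed inequality you discard $K:=\int_0^P\rho\phi_\rho^2\,d\rho$ from the left. Set $H:=\int_0^P\phi^2/\rho\,d\rho$. Then $\sup\phi^2$ cannot be controlled by $H$ alone, since a narrow tall bump has small $H$. So $K$ is exactly what must absorb the error later, and once it is thrown away the argument cannot close.

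\emph{(ii) Your pointwise inequality is false.} The claim $\rho\phi^2(\rho)\le\int_0^P(\rho\phi_\rho^2+\phi^2/\rho)\,d\rho$ fails: under $\phi(\rho)\mapsto\phi(\rho/L)$ the right side is scale invariant while the left grows like $L$. What $\frac{d}{d\rho}\phi^2=2\phi\phi_\rho$ and $\phi(0)=0$ actually give is $\phi^2(\rho)\le 2K^{1/2}H^{1/2}$, with no factor $\rho$.

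\emph{(iii) Your splitting does not produce $\tilde Q_0$.} The factorization $\phi^4=(\phi^2/\rho)\cdot\rho\phi^2$ does not bring in the constraint. The one that does is $\int_0^P\rho\phi^4\,d\rho\le\|\phi\|_\infty^2\int_0^P\rho\phi^2\,d\rho=\|\phi\|_\infty^2\,\tilde Q_0/(4\pi)$.

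\emph{(iv) Constants matter.} An estimate ``up to constants'' only yields $\tilde Q_0\ge c|N|/(a\lambda)$; the explicit threshold needs exact bookkeeping.

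The paper keeps both $K$ and $N^2H$ on the left of the tested identity:
\[
K+N^2H\le 4a\lambda\int_0^P\rho\phi^4\,d\rho-(2\lambda b-\omega^2)\frac{\tilde Q_0}{4\pi}\le\frac{2a\lambda\tilde Q_0}{\pi}K^{1/2}H^{1/2}-(2\lambda b-\omega^2)\frac{\tilde Q_0}{4\pi}.
\]
It then applies Young's inequality with $\epsilon=1/(4a\lambda)$, which cancels $K$ exactly. This leaves $(a^2\lambda^2\tilde Q_0^2/\pi^2-N^2)H\ge(2\lambda b-\omega^2)\tilde Q_0/(4\pi)>0$, hence $\tilde Q_0>\pi|N|/(a\lambda)$. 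Equivalently, $2\sqrt{KH}\le K/|N|+|N|H$ gives $K+N^2H<\frac{a\lambda\tilde Q_0}{\pi|N|}(K+N^2H)$.

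\textbf{Two smaller points.} First, your functional $2\pi\int_0^P\rho(\phi_\rho^2+N^2\phi^2/\rho^2+U(\phi))\,d\rho$ does not generate \eqref{DE}. Its Euler--Lagrange equation contains $\tfrac12U'(\phi)$, i.e.\ \eqref{DE} with $\lambda$ replaced by $\lambda/2$. Yet the pairing identity you use later is that of \eqref{DE}. Use the paper's weighting $\int_0^P\{\tfrac12(\rho\phi_\rho^2+\tfrac{N^2}{\rho}\phi^2)+\lambda\rho(\phi^6-a\phi^4+b\phi^2)\}\,d\rho$, for which $\omega^2=8\pi\mu$.

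Second, your paragraph on the sign of $\mu$ establishes nothing and can be deleted. The paper simply names the real multiplier $\omega^2$, and a sign condition enters only through the hypothesis $\omega_0^2<2\lambda b$ in the second assertion.
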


\medskip
\noindent\textbf{Numerical Implications.} We complement our existence theory with a spectral-Galerkin framework and validate the theoretical bounds and characterize the physical structure of the vortices, revealing three distinct behaviors, in agreement with the numerical work of Volkov and W\"ohnert \cite{VW}:
\begin{enumerate}
    \item \textit{Amplitude Saturation:} Consistent with Theorem 1.1, the field amplitude remains strictly bounded. As the prescribed norm $\tilde{Q}_0$ increases, the soliton profile transitions from a Gaussian-like shape to a saturated ``flat-top" structure, asymptotically approaching the theoretical ceiling $\phi_{max} \approx \sqrt{2a/3}$.
    \item \textit{Nonlinear Dispersion:} The eigenfrequency $\omega^2$ exhibits a monotonic, nonlinear dependence on the prescribed norm. Our computations map the full frequency existence range, showing the frequency descending from the linear limit $\omega_{max}^2 = 2\lambda b$ toward the critical existence threshold $\omega_{min}^2=2\lambda\left(b-\frac{1}{4}a^2\right)$ as the vortex grows.
    \item \textit{Topological Geometry:} For higher winding numbers ($|N| \ge 2$), the numerical profiles demonstrate the dominant role of the centrifugal barrier ($N^2/\rho^2$). Increasing the topological charge significantly widens the vortex core, displacing the energy density radially outward and confirming the existence of topologically distinct states with phase winding numbers corresponding to $N$.
\end{enumerate}

	\section{Proof of Theorem 1.1}
	\setcounter{equation}{0}
	
	In this section, we establish Theorem 1.1 as a result of the three separate lemmas below. We first establish necessary bounds on the frequency $\omega$ and the amplitude $\phi$, and then use these bounds to prove the exponential decay of the solution near the boundary.
	
	\begin{lemma}
		If $\phi\in\mathcal{C}[0,P]\cap\mathcal{C}^2(0,P)$ is a nontrivial solution of \eqref{DE}, then
		\begin{align}
			2\lambda\left(b-\dfrac{1}{3}a^2\right)+\dfrac{N^2}{P^2}<\omega^2.
		\end{align}
	\end{lemma}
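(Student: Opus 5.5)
The plan is to derive an integral identity by testing \eqref{DE} against $\rho\phi$, and then to bound the nonlinear term from below pointwise. Writing the equation in divergence form $(\rho\phi_\rho)_\rho=\frac{N^2}{\rho}\phi+\lambda\rho(6\phi^5-4a\phi^3+2b\phi)-\omega^2\rho\phi$, I multiply by $\phi$ and integrate over $[\varepsilon,P-\delta]$. Integrating by parts and letting $\varepsilon,\delta\to0$ should give
\begin{align*}
\int_0^P\rho\phi_\rho^2\,d\rho+\int_0^P\frac{N^2}{\rho}\phi^2\,d\rho+\lambda\int_0^P\rho\left(6\phi^6-4a\phi^4+2b\phi^2\right)d\rho=\omega^2\int_0^P\rho\phi^2\,d\rho ,
\end{align*}
provided the boundary terms $\rho\phi\phi_\rho$ vanish at $\rho=0$ and $\rho=P$.

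Once this identity is in hand, two elementary bounds finish the argument.
\begin{itemize}
\item For the nonlinear term, view $6t^2-4at+2b$ as a quadratic in $t=\phi^2$. Its minimum value is $2b-\frac{16a^2}{24}=2\left(b-\frac{a^2}{3}\right)$, so
\begin{align*}
6\phi^6-4a\phi^4+2b\phi^2\ge 2\left(b-\tfrac{a^2}{3}\right)\phi^2 .
\end{align*}
\item For the centrifugal term, $\rho\le P$ gives $\frac{N^2}{\rho}\ge\frac{N^2}{P^2}\rho$.
\end{itemize}
Substituting both bounds yields
\begin{align*}
\int_0^P\rho\phi_\rho^2\,d\rho+\left[2\lambda\left(b-\tfrac{a^2}{3}\right)+\tfrac{N^2}{P^2}-\omega^2\right]\int_0^P\rho\phi^2\,d\rho\le0 .
\end{align*}
Since $\phi$ is nontrivial and $\phi(P)=0$, $\phi$ is not constant, so $\int_0^P\rho\phi_\rho^2\,d\rho>0$. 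Also $\int_0^P\rho\phi^2\,d\rho>0$. Dividing by this positive integral gives the strict inequality claimed in the lemma.

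The main obstacle is justifying the vanishing of the boundary terms, because we only assume $\phi\in\mathcal{C}[0,P]\cap\mathcal{C}^2(0,P)$ and have no a priori control of $\phi_\rho$ at the endpoints.

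At $\rho=P$, the equation is regular: the coefficients are smooth near $P$ and $\phi$ is bounded there. Integrating $(\rho\phi_\rho)_\rho$ shows that $\rho\phi_\rho$ has a finite limit as $\rho\to P$. Combined with $\phi(P)=0$, this gives $\rho\phi\phi_\rho\to0$.

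At $\rho=0$, I would argue from the same divergence form. The right-hand side is $\frac{N^2}{\rho}\phi+O(\rho)$, and $\phi\to\phi(0)=0$.
\begin{itemize}
\item If $N=0$, the right-hand side is $O(\rho)$, so $\rho\phi_\rho$ has a limit $L$. If $L\neq0$, then $\phi\sim L\log\rho$, which contradicts continuity at $0$. Hence $L=0$.
\item If $N\neq0$, the equation near $0$ is a perturbation of the Euler equation. Its solutions behave like $\rho^{\pm|N|}$, and continuity with $\phi(0)=0$ rules out the $\rho^{-|N|}$ branch. So $\phi=O(\rho^{|N|})$ and $\rho\phi_\rho=O(\rho^{|N|})$. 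This also shows that $\int_0^P\frac{N^2}{\rho}\phi^2\,d\rho$ is finite.
\end{itemize}

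If this ODE asymptotic argument becomes delicate, my fallback is to avoid sending $\varepsilon\to0$ directly. Instead I would keep the identity on $[\varepsilon,P]$, note that every interior integrand except $\phi^2/\rho$ is integrable, and pick a sequence $\varepsilon_k\to0$ along which $\varepsilon_k\phi\phi_\rho(\varepsilon_k)\to0$. Such a sequence exists because $\frac{d}{d\rho}\phi^2=2\phi\phi_\rho$ and $\phi^2\to0$.
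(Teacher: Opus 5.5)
Your argument is correct, but it follows a different route from the paper. The paper does not integrate at all. After replacing $\phi$ by $-\phi$ if needed, it evaluates \eqref{DE} at a positive interior maximum $\rho_0$, where $\phi_\rho(\rho_0)=0$ and $\phi_{\rho\rho}(\rho_0)\le 0$. This gives the pointwise inequality $\omega^2\ge N^2/\rho_0^2+2\lambda\bigl(3\phi^4(\rho_0)-2a\phi^2(\rho_0)+b\bigr)$, and minimizing the quadratic in $\phi^2$ together with $\rho_0<P$ finishes the proof. That route needs no information about $\phi_\rho$ at the endpoints, which is exactly where your proof spends its effort.

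Your energy identity (the same test against $\rho\phi$ the paper uses in Section 4 for the bound on $\tilde Q_0$) buys two things in return. First, strictness comes from $\int_0^P\rho\phi_\rho^2\,d\rho>0$, so it holds for $N=0$ as well. The paper's strictness rests on $1/\rho_0^2>1/P^2$, and as written it only yields $\ge$ when $N=0$. Second, your argument applies directly to weak finite-energy solutions, where a pointwise maximum argument is not directly available.

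At the endpoint $0$, I would make your fallback the main argument, since the Euler-type asymptotics for $N\neq0$ are only sketched. Two details there need care:
\begin{itemize}
\item $\rho\phi_\rho^2$ is not known a priori to be integrable near $0$ either. This is harmless, because it and $N^2\phi^2/\rho$ are both nonnegative: apply the pointwise bounds on $[\varepsilon_k,P]$ first and then let $k\to\infty$. This also shows $\int_0^P\rho\phi_\rho^2\,d\rho<\infty$.
\item The existence of $\varepsilon_k$ deserves one line. If $|\rho(\phi^2)_\rho|\ge c>0$ on $(0,\rho_1)$, then $(\phi^2)_\rho$ has a fixed sign there and $|\phi^2(\rho_1)-\phi^2(\varepsilon)|\ge c\ln(\rho_1/\varepsilon)\to\infty$, contradicting the boundedness of $\phi$.
\end{itemize}
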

	\begin{proof}
		If $\phi$ is a solution to \eqref{DE}, then $-\phi$ is also a solution. Hence, without loss of generality, there exists a point $\rho_0\in(0,P)$ such that $\phi(\rho_0)>0$ is a local maximum, so $\phi_{\rho}(\rho_0)=0$ and $\phi_{\rho\rho}(\rho_0)\leq 0$. Substituting these into \eqref{DE}, we get
		\begin{align}
			0 \geq \phi_{\rho\rho}(\rho_0) &= \left(\dfrac{N^2}{\rho_0^2}+\lambda(6\phi^4(\rho_0)-4a\phi^2(\rho_0)+2b)-\omega^2\right)\phi(\rho_0).
		\end{align}
		Since $\phi(\rho_0) > 0$, the term in the parenthesis must be non-positive:
		\begin{align}
			\omega^2 &\geq \dfrac{N^2}{\rho_0^2} + 2\lambda(3\phi^4(\rho_0)-2a\phi^2(\rho_0)+b).
		\end{align}
		The quadratic polynomial $f(u) = 3u^2 - 2au + b$ (with $u=\phi^2$) achieves its global minimum at $u = a/3$. The minimum value is $f(a/3) = b - a^2/3$. Using this and the fact that $\rho_0 < P$ (so $1/\rho_0^2 > 1/P^2$), we obtain the strict inequality:
		\begin{align}
			\omega^2 &> \dfrac{N^2}{P^2} + 2\lambda\left(b-\dfrac{1}{3}a^2\right).
		\end{align}
		$\qquad\square$
	\end{proof}
	
	\begin{lemma}
		If $\phi\in\mathcal{C}^0[0,P]\cap\mathcal{C}^2(0,P)$ is a solution of \eqref{DE} and $\omega^2<2\lambda b+\dfrac{N^2}{P^2}$, then $\phi^2(\rho)<\dfrac{2a}{3}$ for all $\rho\in[0,P]$.
	\end{lemma}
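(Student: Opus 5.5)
The plan is a maximum principle argument at an interior extremum of $\phi^2$, in the spirit of the proof of Lemma 2.1. Since $\phi(0)=\phi(P)=0$ and $\phi$ is continuous on $[0,P]$, the bound is trivial if $\phi\equiv 0$. Otherwise $\phi^2$ attains a positive maximum at some $\rho_0\in(0,P)$. Using the symmetry $\phi\mapsto-\phi$, I may assume $\phi(\rho_0)>0$, so that $\rho_0$ is a local maximum of $\phi$. Then $\phi_\rho(\rho_0)=0$ and $\phi_{\rho\rho}(\rho_0)\le 0$, and it suffices to show $u_0:=\phi^2(\rho_0)<2a/3$.

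Substituting into \eqref{DE} and dividing by $\phi(\rho_0)>0$ gives, as in Lemma 2.1,
\begin{align*}
\omega^2\ \ge\ \frac{N^2}{\rho_0^2}+2\lambda\left(3u_0^2-2au_0+b\right)\ >\ \frac{N^2}{P^2}+2\lambda\left(3u_0^2-2au_0+b\right),
\end{align*}
where the strict inequality uses $\rho_0<P$. This is strict when $N\neq0$; if $N=0$ the two sides agree and I keep only the non-strict version. Combining with the hypothesis $\omega^2<2\lambda b+N^2/P^2$ yields
\begin{align*}
2\lambda\left(3u_0^2-2au_0+b\right) < 2\lambda b .
\end{align*}
This inequality is strict in every case: for $N\neq 0$ it follows from the strict step above, and for $N=0$ the strictness comes from the hypothesis itself. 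Dividing by $2\lambda>0$ gives $3u_0^2-2au_0<0$, that is, $u_0(3u_0-2a)<0$. Since $u_0>0$, this forces $u_0<2a/3$.

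Finally, for every $\rho\in[0,P]$ we have $\phi^2(\rho)\le u_0<2a/3$, and at the endpoints $\phi^2=0<2a/3$ directly, which proves the claim. There is no real obstacle here. The only care needed is to track strictness when $N=0$, and as noted the hypothesis on $\omega$ already supplies it.
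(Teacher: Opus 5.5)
Your proof is correct and takes essentially the same route as the paper: evaluate the ODE at an interior extremum where $\phi_\rho=0$ and $\phi_{\rho\rho}\le 0$, then use $\omega^2<2\lambda b+N^2/P^2$ to force $3u_0^2-2au_0<0$. Maximizing $\phi^2$ rather than $\phi$, as you do, is slightly more careful than the paper, because it also covers a negative minimum of $\phi$ whose modulus exceeds the positive maximum.
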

	\begin{proof}
		Let $\phi\in\mathcal{C}^0[0,P]\cap\mathcal{C}^2(0,P)$ be a solution of \eqref{DE} with $\omega^2<2\lambda b+\dfrac{N^2}{P^2}$. As before, assume $\phi$ has a positive global maximum at $\rho_0\in(0,P)$ such that $\phi_{\rho}(\rho_0)=0$ and $\phi_{\rho\rho}(\rho_0)\leq 0$. Substituting into \eqref{DE}:
		\begin{align}
			0 \geq \phi_{\rho\rho}(\rho_0)&=\left(\dfrac{N^2}{\rho_0^2}+2\lambda b - \omega^2\right)\phi(\rho_0)+6\lambda\phi^3(\rho_0)\left(\phi^2(\rho_0)-\dfrac{2a}{3}\right).
		\end{align}
		By the hypothesis on $\omega^2$, we have $\dfrac{N^2}{\rho_0^2}+2\lambda b - \omega^2 > \dfrac{N^2}{P^2}+2\lambda b - \omega^2 > 0$. Since $\phi(\rho_0) > 0$, the first term on the right-hand side is strictly positive. For the sum to be non-positive, the second term must be negative:
		\begin{align}
			6\lambda\phi^3(\rho_0)\left(\phi^2(\rho_0)-\dfrac{2a}{3}\right) < 0.
		\end{align}
		This implies $\phi^2(\rho_0)<\dfrac{2a}{3}$. Since $\phi(\rho_0)$ is the global maximum, it follows that $\phi^2(\rho) \leq \phi^2(\rho_0) < \dfrac{2a}{3}$ for all $\rho\in[0,P]$.
		$\qquad\square$
	\end{proof}
	
	\begin{lemma}
		Let $\phi\in\mathcal{C}^0[0,P]\cap\mathcal{C}^2(0,P)$ is a solution of \eqref{DE}. If $\omega^2$ satisfies the inequality
		\begin{equation}
			\omega^2<2\lambda b+\dfrac{N^2}{P^2},
		\end{equation}
		then there exist constants $P_0\in(0,P)$ and $\sigma>0$ (depending only on the system parameters), such that
		\begin{equation*}
			\phi^2(\rho)\leq \frac{2a}{3}\exp(-\sigma(\rho-P_0))\qquad\text{for every $\rho\in[P_0,P]$.}
		\end{equation*}
	\end{lemma}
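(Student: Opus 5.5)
We argue by comparison with an explicit exponential supersolution for the quantity $u=\phi^2$. Differentiating twice and using \eqref{DE} gives, on $(0,P)$,
\begin{align*}
u_{\rho\rho}+\frac{1}{\rho}u_\rho &= 2\phi_\rho^2+2\phi\left(\phi_{\rho\rho}+\frac{1}{\rho}\phi_\rho\right)\\
&=2\phi_\rho^2+2\left(\frac{N^2}{\rho^2}+2\lambda b-\omega^2\right)u+2\lambda(6u^3-4au^2).
\end{align*}
By Lemma 2.2 we have $0\le u<2a/3$. Hence $6u^3-4au^2=6u^2(u-2a/3)\le 0$, so this nonlinear term has the wrong sign and cannot simply be dropped. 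To handle it we restrict to a region where $u$ is small. Set $\sigma_0^2=\frac{N^2}{P^2}+2\lambda b-\omega^2>0$, which holds by hypothesis. Since $u$ is continuous with $u(P)=0$, we can choose $P_0<P$ close to $P$ such that $8\lambda a\,u(\rho)\le \sigma_0^2$ on $[P_0,P]$. On that interval $8\lambda a u^2\le \sigma_0^2 u$, and using $N^2/\rho^2\ge N^2/P^2$ we get
\[
u_{\rho\rho}+\frac{1}{\rho}u_\rho\ \ge\ 2\sigma_0^2u-\sigma_0^2u=\sigma_0^2u\qquad\text{on }(P_0,P).
\]

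Next we build a comparison function. Take $w(\rho)=\frac{2a}{3}e^{-\sigma(\rho-P_0)}$ for some $\sigma>0$. Then
\[
w''+\frac{1}{\rho}w'-\sigma_0^2w=\left(\sigma^2-\frac{\sigma}{\rho}-\sigma_0^2\right)w .
\]
This is $\le 0$ whenever $\sigma\le\sigma_0$, since $\sigma/\rho>0$. The first-derivative term only helps here. Choosing $\sigma=\sigma_0$ reproduces the decay rate stated in Theorem 1.1.

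We then compare $u$ and $w$ by the maximum principle. Let $v=u-w$, so that $v''+\frac{1}{\rho}v'-\sigma_0^2v\ge 0$ on $(P_0,P)$. At the endpoints, $v(P_0)=u(P_0)-\frac{2a}{3}<0$ by Lemma 2.2, and $v(P)=-w(P)<0$. If $v$ had a positive interior maximum at some $\rho_1$, then $v'(\rho_1)=0$ and $v''(\rho_1)\le0$, which gives $-\sigma_0^2v(\rho_1)\ge0$, a contradiction. Hence $v\le0$, which is exactly the claimed estimate.

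The main obstacle is the choice of $P_0$. The argument above produces a $P_0$ that depends on the particular solution, through the smallness of $u$ near $P$. The statement, however, says $P_0$ depends only on the parameters. I would first try to make this uniform by bounding $|\phi_\rho|$ near $P$ a priori from the ODE and the bound $u<2a/3$, which would give a parameter-only neighborhood where $u$ is small. Failing that, I would accept a solution-dependent $P_0$, which matches the phrase ``$P_0$ sufficiently large'' used in Theorem 1.1.
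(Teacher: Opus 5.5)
Your proposal is correct and follows the paper's proof essentially step for step. Both arguments use the same subsolution inequality $\Delta(\phi^2)\ge\sigma^2\phi^2$ on $[P_0,P]$ with $\sigma^2=\frac{N^2}{P^2}+2\lambda b-\omega^2$, obtained by taking $P_0$ close enough to $P$ that the negative nonlinear term costs at most half of $\sigma^2$. Both then compare with the same function $\frac{2a}{3}e^{-\sigma(\rho-P_0)}$ via the maximum principle, with the endpoint signs supplied by Lemma 2.2.

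Your caveat about $P_0$ applies equally to the paper, which also picks $P_0$ from the continuity of the particular $\phi$ at $\rho=P$. Your proposed fix does work: the ODE and $\phi^2<2a/3$ give a parameter-only bound on $|\phi_\rho|$ over $[P/2,P)$, hence $\phi^2\le K^2(P-\rho)^2$ there. That would make $P_0$ depend only on the parameters, as the statement claims.
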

	\begin{proof}
		Rewrite \eqref{DE} as:
		\begin{align}
			\Delta\phi = \left(\dfrac{N^2}{\rho^2}-\omega^2+\lambda\left(6\phi^4-4a\phi^2+2b\right)\right)\phi.
		\end{align}
		We compute the Laplacian of $w = \phi^2$:
		\begin{align}
			\Delta w = 2\phi\Delta\phi + 2|\nabla\phi|^2 \geq 2\phi\Delta\phi = 2\left(\dfrac{N^2}{\rho^2}-\omega^2+2\lambda b + \lambda(6\phi^4-4a\phi^2)\right)w.
		\end{align}
		Since $\phi(P)=0$ and $\phi$ is continuous, for any $\epsilon > 0$, there exists a $P_0$ sufficiently close to $P$ such that the nonlinear term satisfies $6\lambda\phi^4 - 4a\lambda\phi^2 > -\epsilon$ for all $\rho \in [P_0, P]$.
		
		Let $2\delta = \frac{N^2}{P^2} + 2\lambda b - \omega^2$, which is strictly positive by hypothesis. We choose $P_0$ such that the nonlinear term is small enough, specifically choosing $\epsilon = \delta$. Then for all $\rho \in [P_0, P]$:
		\begin{align}
			\Delta w \geq 2\left( \frac{N^2}{P^2} - \omega^2 + 2\lambda b - \delta \right)w = 2\delta w.
		\end{align}
		Now, define the comparison function $\xi(\rho) = \frac{2a}{3} e^{-\sigma(\rho-P_0)}$, where we set $\sigma = \sqrt{2\delta}$. Computing the Laplacian of $\xi$:
		\begin{align}
			\Delta \xi = \xi'' + \frac{1}{\rho}\xi' = \sigma^2 \xi - \frac{\sigma}{\rho}\xi < \sigma^2 \xi = 2\delta \xi.
		\end{align}
		Define $v(\rho) = w(\rho) - \xi(\rho)$. Then:
		\begin{align}
			\Delta v = \Delta w - \Delta \xi > 2\delta w - 2\delta \xi = 2\delta v.
		\end{align}
		Thus $(\Delta - 2\delta)v > 0$ on $(P_0, P)$. By the maximum principle, the maximum of $v$ must occur on the boundary.
		At the inner boundary $\rho = P_0$: $v(P_0) = \phi^2(P_0) - \frac{2a}{3}$. By Lemma 2.2, we know $\phi^2(\rho) < \frac{2a}{3}$ globally, so $v(P_0) < 0$.
		At the outer boundary $\rho = P$: $v(P) = 0 - \frac{2a}{3} e^{-\sigma(P-P_0)} < 0$.
		Since $v < 0$ on the boundary, $v(\rho) \leq 0$ for all $\rho \in [P_0, P]$.
		Therefore, $\phi^2(\rho) \leq \frac{2a}{3} e^{-\sqrt{2\delta}(\rho-P_0)}$.
		$\qquad\square$
	\end{proof}

	\section{Proof of Theorem 1.2}
	\setcounter{equation}{0}
	Fundamental $Q$-vortices are nontrivial positive solutions of \eqref{DE}. We first use a direct minimization approach to establish the existence of fundamental $Q$-vortices which arise as minima of a suitable action functional. We then utilize the Mountain Pass Theorem to establish the existence of a second, distinct solution of the saddle point type.
	
	To this end, let the action functional $I_{\omega}:H\rightarrow\mathbb{R}$ be given by
	\begin{align}
		I_{\omega}(\phi)=\int_0^{P}\left\{\dfrac{1}{2}\left(\rho\phi_{\rho}^2+\dfrac{N^2}{\rho}\phi^2-\omega^2\rho\phi^2\right)+\lambda\rho(\phi^6-a\phi^4+b\phi^2)\right\}d\rho,\label{Daction}
	\end{align}
	where $H$ is the completion of $X=\left\{\phi\in\mathcal{C}^1[0,P]: \phi(0)=0=\phi(P)\right\}$ endowed with the inner product
	\begin{align}
		(\phi,\tilde{\phi})&=\int_0^P\left\{\rho \phi_{\rho}\tilde{\phi}_{\rho}+\dfrac{1}{\rho}\phi\tilde{\phi} \right\}d\rho,\quad \phi,\tilde{\phi}\in H,\label{innerP}
	\end{align}
	and norm $||\phi||^2=(\phi,\phi)$.
	
	Consider the direct minimization problem
	\begin{align}
		\eta_0=\inf\{I_{\omega}(\phi):\phi\in H\}.\label{Dmin}
	\end{align}
	Rewrite the functional \eqref{Daction} in the form
	\begin{align}
		I_{\omega}(\phi)=\int_0^{P}\left\{\dfrac{1}{2}\left(\rho\phi_{\rho}^2+\dfrac{N^2}{\rho}\phi^2\right)+\lambda\rho h(\phi)\right\}d\rho
	\end{align}
	with
	\begin{align}
		h(\phi)=\phi^6-a\phi^4+\left(b-\dfrac{\omega^2}{2\lambda}\right)\phi^2.
	\end{align}
	Since the coefficient of $\phi^6$ is positive, $h(\phi)$ is bounded from below by some constant $h_{\min}$. It follows that
	\begin{align}
		I_{\omega}(\phi)\geq\dfrac{1}{2}||\phi||^2+\dfrac{1}{2}\lambda h_{\min}P^2\label{coercive}
	\end{align}
	and $I_{\omega}$ is coercive on $H$. Thus the direct minimization problem \eqref{Dmin} is well-defined.
	
	The following lemma proves that the functional $I_{\omega}$ satisfies the Palais-Smale (PS) condition.
	
	\begin{lemma}
		Let $\{\phi_j\}_{j=1}^{\infty}$ be a sequence in $H$ such that $|I_{\omega}(\phi_j)|\leq \alpha$ for some constant $\alpha>0$ and $I'_{\omega}(\phi_j)\rightarrow 0$ as $j\rightarrow\infty$. Then there exists a subsequence of $\{\phi_j\}_{j=1}^{\infty}$ which converges strongly to some element $\phi$ in $H$.
	\end{lemma}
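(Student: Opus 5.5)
The plan is the standard Palais--Smale argument for a coercive functional whose nonlinearity is compact on $H$: first get boundedness, then weak compactness, then upgrade to strong convergence using the derivative.

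\textbf{Boundedness and weak limit.} From the coercivity estimate \eqref{coercive} and $|I_\omega(\phi_j)|\le\alpha$ we get
\[
\tfrac12\|\phi_j\|^2\le \alpha-\tfrac12\lambda h_{\min}P^2 .
\]
So $\{\phi_j\}$ is bounded in the Hilbert space $H$. Passing to a subsequence, $\phi_j\rightharpoonup\phi$ weakly in $H$.

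\textbf{Compactness of the lower-order terms.} I would use the one-dimensional structure of $H$. For $\phi\in X$ and $0<\rho<P$,
\[
\phi^2(\rho)=2\int_0^\rho \phi\phi_r\,dr\le 2\Big(\int_0^P \tfrac{1}{r}\phi^2\,dr\Big)^{1/2}\Big(\int_0^P r\phi_r^2\,dr\Big)^{1/2}\le \|\phi\|^2 ,
\]
which gives the embedding $H\hookrightarrow \mathcal C[0,P]$ with $\sup|\phi|\le\|\phi\|$. On any $[\delta,P]$ the weight $\rho$ is bounded below, so $H$ embeds into $H^1(\delta,P)$, and that embedding is compact into $\mathcal C[\delta,P]$. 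Near $0$ the contribution is uniformly small: $\int_0^\delta \rho\phi^2\,d\rho\le \tfrac12\delta^2\|\phi\|^2$.

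The main obstacle is to turn these facts into the statement
\[
\int_0^P\rho\,|\phi_j-\phi|^2\big(1+\phi_j^4+\phi^4+\phi_j^2+\phi^2\big)\,d\rho\to0 .
\]
This is exactly what is needed for the polynomial terms of degree $2$, $4$ and $6$. It follows from the uniform sup bound together with the small-$\delta$ splitting above, combined with local uniform convergence on $[\delta,P]$ (Arzel\`a--Ascoli via the H\"older-$1/2$ bound).

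\textbf{Strong convergence.} Write
\[
\big(I'_\omega(\phi_j)-I'_\omega(\phi)\big)(\phi_j-\phi)=\|\phi_j-\phi\|^2+(N^2-1)\int_0^P\frac{(\phi_j-\phi)^2}{\rho}\,d\rho+R_j ,
\]
where $R_j$ collects the polynomial terms $\int\rho\,[g(\phi_j)-g(\phi)](\phi_j-\phi)\,d\rho$ with $g(s)=\lambda(6s^5-4as^3+2bs)-\omega^2 s$.

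The left side tends to $0$ for two reasons. First, $I'_\omega(\phi_j)\to0$ and $\phi_j-\phi$ is bounded. Second, $I'_\omega(\phi)(\phi_j-\phi)\to0$ by weak convergence. The term $R_j\to0$ by the compactness step above. Since $|N|\ge1$, the middle term is nonnegative; in the case $N=0$, one could instead note $N^2/\rho$ appears in both norm and functional so the combination is $\int\rho|(\phi_j-\phi)_\rho|^2+N^2\int(\phi_j-\phi)^2/\rho$, which is equivalent to the norm. Hence $\|\phi_j-\phi\|\to0$, completing the proof.
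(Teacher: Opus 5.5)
Your argument is correct for $|N|\ge1$, which is the setting where the lemma is used; the coercivity bound \eqref{coercive} that you and the paper both invoke already requires $N^2\ge1$. You share the paper's skeleton: boundedness, weak limit, compactness of the polynomial terms, then testing the derivative against $\phi_j-\phi$. The compactness step, however, is different.

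The paper regards the $\phi_j$ as radial elements of $W^{1,2}_0(D_P)$ and uses Rellich--Kondrachov into every $L^p(D_P)$. It first passes to the limit to show $I'_\omega(\phi)=0$. It then controls the quintic term by H\"older with exponents $6/5$ and $6$, plus continuity of the Nemytskii map $\phi\mapsto\phi^5$ from $L^6$ to $L^{6/5}$.

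You stay one-dimensional. The bound $\sup|\phi|\le\|\phi\|$ is a sharper form of the embedding the paper itself uses in Lemma 3.3. Together with compactness of $H^1(\delta,P)\hookrightarrow\mathcal{C}[\delta,P]$ and $\int_0^\delta\rho\phi^2\,d\rho\le\frac12\delta^2\|\phi\|^2$, it gives $R_j\to0$ through the mean value theorem with a uniformly bounded weight. No H\"older bookkeeping or Nemytskii continuity is needed.

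Using only $I'_\omega(\phi)\in H^*$ and weak convergence in $H$ also spares you from identifying $\phi$ as a critical point first. The paper justifies that step only with weak convergence in $W^{1,2}_0(D_P)$, which by itself does not control the $N^2/\rho$ term. You also make explicit the nonnegative term $(N^2-1)\int_0^P(\phi_j-\phi)^2/\rho\,d\rho$, which the paper drops tacitly when it isolates $\|\phi_j-\phi\|^2$. The paper's route transfers more readily to settings without an $L^\infty$ embedding; yours is more elementary and self-contained here.

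Your closing remark about $N=0$ is wrong and should be removed. For $N=0$ the combination is just $\int_0^P\rho(\phi_j-\phi)_\rho^2\,d\rho$, which is not equivalent to $\|\cdot\|$. In the variable $s=\ln\rho$, the needed inequality $\int_0^P\psi^2/\rho\,d\rho\le C\int_0^P\rho\psi_\rho^2\,d\rho$ becomes a Poincar\'e inequality on a half-line, and it fails there. A counterexample is a function equal to $1$ for $s\in[-L,\ln P-1]$ with unit-length linear ramps at each end. Its gradient term is $2$, while $\int_0^P\psi^2/\rho\,d\rho\ge L+\ln P-1$. For the same reason the boundedness step from \eqref{coercive} fails when $N=0$, so that case lies outside both arguments. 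Since Theorem 1.2 assumes $|N|\ge1$, nothing is lost.
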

	\begin{proof}
		Suppose that $\{\phi_j\}_{j=1}^{\infty}$ is a sequence in $H$ such that $|I_{\omega}(\phi_j)|\leq \alpha$ for some constant $\alpha>0$. Then the coercive bound \eqref{coercive} gives us the boundedness of the sequence $\{\phi_j\}_{j=1}^{\infty}$ in $H$. We may treat each $\phi_j$, $j=1,2,\ldots$, as a real-valued radially symmetric function defined over the disc $D_P=\{(x,y)\in\mathbb{R}^2:x^2+y^2\leq P^2\}$ and vanishing on the boundary $\partial D_P$. From the inequality
		\begin{align}
			\int_0^P\rho\phi^2d\rho\leq P^2\int_0^P\dfrac{\phi^2}{\rho}d\rho,
		\end{align}
		it follows that $H$ is an embedded subspace of the standard Sobolev space $W^{1,2}_0(D_P)$, composed of radially symmetric functions such that each element $\phi$ of $H$ satisfies $\phi(0)=0$. Hence, $\{\phi_j\}_{j=1}^{\infty}$ is bounded in $W^{1,2}_0(D_P)$. Without loss of generality, we may suppose that $\phi_j\rightharpoonup \phi$ weakly in $W^{1,2}_0(D_P)$ as $j\rightarrow\infty$.
		The compact embedding $W^{1,2}(D_P)\subset\subset L^p(D_P)$ $(p\geq 1)$ then gives the strong convergence of $\phi_j\rightarrow\phi$ in $L^p(D_P)$ as $j\rightarrow\infty$ for every $p\geq 1$. Moreover, $\phi$ is radially symmetric and we may write $\phi=\phi(\rho)$, which satisfies $\phi(P)=0$.
		
		By hypothesis, we also suppose that $I'_{\omega}(\phi_j)\rightarrow 0$ as $j\rightarrow\infty$ or, equivalently, that there is a sequence $\epsilon_j\geq 0$ for every $j=1,2,\ldots,$ such that
		\begin{equation}
			|(I'_{\omega}(\phi_j),\tilde{\phi})|\leq \epsilon_j||\tilde{\phi}||\label{PS2}
		\end{equation}
		for every $\tilde{\phi}\in H$ and $\epsilon_j\rightarrow 0$ as $j\rightarrow\infty$. We also use the notation $(\cdot,\cdot)$ to denote the duality pairing between $H$ and its dual space $H^{-1}$.
		
		Letting $j\rightarrow\infty$ in \eqref{PS2}, we get
		\begin{align}
			0=\int_0^P\left\{\rho\phi_{\rho}\tilde{\phi}_{\rho}+\dfrac{N^2}{\rho}\phi\tilde{\phi}+\lambda\rho\left(6\phi^5\tilde{\phi}-4a\phi^3\tilde{\phi}+2\left(b-\dfrac{\omega^2}{2\lambda}\right)\phi\tilde{\phi}\right)\right\}d\rho \label{PSP2}
		\end{align}
		for every $\tilde{\phi}\in H$. Let $\tilde{\phi}=\phi_j-\phi$ in \eqref{PS2} and \eqref{PSP2}. Inserting the result of \eqref{PSP2} into the resulting \eqref{PS2}, we arrive at
		\begin{align}
			&\left|\int_0^P\left\{\rho(\phi_{j,\rho}-\phi_{\rho})^2+\dfrac{N^2}{\rho}(\phi_j-\phi)^2+\lambda\rho\left(6(\phi_j^5-\phi^5)-4a(\phi_j^3-\phi^3)\right.\right.\right.\nonumber\\
			&\left.\left.\left.+2\left(b-\dfrac{\omega^2}{2\lambda}\right)(\phi_j-\phi)\right)(\phi_j-\phi)\right\}d\rho\right|\leq\epsilon_j||\phi_j-\phi||.
		\end{align}
		Consequently, using the triangle inequality, we isolate the norm $||\phi_j - \phi||^2$:
		\begin{align}
			||\phi_j-\phi||^2&\leq\lambda\int_0^P\rho\left(6|\phi_j^5-\phi^5|+4a|\phi_j^3-\phi^3|+2\left|b-\dfrac{\omega^2}{2\lambda}\right||\phi_j-\phi|\right)|\phi_j-\phi|d\rho\nonumber\\
			&+\epsilon_j||\phi_j-\phi||.
		\end{align}
		To prove strong convergence, we show that the integral term on the right-hand side vanishes as $j \to \infty$. We analyze the highest order term $\int_0^P \rho |\phi_j^5 - \phi^5| |\phi_j - \phi| d\rho$. By the generalized H\"older inequality with exponents $p = 6/5$ and $q = 6$ (where $1/p + 1/q = 1$), we have:
		\begin{align}
			\int_0^P \rho |\phi_j^5 - \phi^5| |\phi_j - \phi| d\rho \leq \left( \int_0^P \rho |\phi_j^5 - \phi^5|^{6/5} d\rho \right)^{5/6} \left( \int_0^P \rho |\phi_j - \phi|^6 d\rho \right)^{1/6}.
		\end{align}
		Since $\phi_j \to \phi$ strongly in $L^6(D_P)$, the term $||\phi_j - \phi||_{L^6} \to 0$. Furthermore, the convergence $\phi_j \to \phi$ in $L^6$ implies $\phi_j^5 \to \phi^5$ strongly in $L^{6/5}(D_P)$ by the continuity of the Nemytskii operator. Thus, the product converges to zero. Similar arguments apply to the lower-order cubic and linear terms using appropriate H\"older exponents (e.g., $p=3/2, q=3$ or $p=2, q=2$).
		Therefore, the entire right-hand side tends to zero as $j \to \infty$, implying $||\phi_j - \phi|| \to 0$ strongly in $H$.
		$\qquad\square$
	\end{proof}
	
	Since $I_{\omega}$ is bounded below, coercive, and satisfies the PS condition, the global minimum exists. However, we must ensure this minimum is nontrivial (i.e., not the zero solution).
	
	\begin{lemma}
		Let $a,b,\lambda > 0$ such that $b > a^2/4$. For any $\omega$ satisfying
		\begin{align}
			2\lambda\left(b-\dfrac{1}{4}a^2\right)<\omega^2, \label{neccCond3}
		\end{align}
		there exists a $\phi_0\in H$ such that $I_{\omega}(\phi_0)<0$, provided $P > P^*$. The bound $P^*$ is given explicitly by
		\begin{align}
			P^* = \frac{\mathcal{B} + \sqrt{\mathcal{B}^2 + 4 \mathcal{A} \mathcal{C}}}{2\mathcal{A}},
		\end{align}
		where $\mathcal{A}, \mathcal{B}, \mathcal{C}$ are positive constants defined in the proof below.
	\end{lemma}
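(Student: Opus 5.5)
The idea is to build an explicit trapezoidal test function. On the bulk of the interval it sits at a constant level $\phi\equiv c$ where the local potential $h$ is negative. Near $\rho=0$ and $\rho=P$ it has linear ramps that enforce the Dirichlet conditions. The negative bulk contribution grows like $P^2$, while the gradient and centrifugal costs grow at most like $P$ (or like $\log$). This leaves a quadratic inequality in $P$, whose positive root is $P^*$.

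\textbf{Choice of level.} Set $\beta=b-\omega^2/(2\lambda)$. By \eqref{neccCond3}, $\beta<a^2/4$. Write
\begin{align}
h(c)=c^2\left(c^4-ac^2+\beta\right)
\end{align}
and choose $c^2=a/2$, the minimizer of the quartic factor. This gives
\begin{align}
h(c)=\frac{a}{2}\left(\beta-\frac{a^2}{4}\right)=:-h_0<0.
\end{align}
If $\beta\le 0$ this is still valid, and simply gives a larger $h_0$.

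\textbf{Test function.} Fix $P>2$ and define
\begin{align}
\phi_0(\rho)=\begin{cases} c\rho, & 0\le\rho\le 1,\\ c, & 1\le\rho\le P-1,\\ c(P-\rho), & P-1\le\rho\le P.\end{cases}
\end{align}
This function is Lipschitz and vanishes at both endpoints. Near $0$ we have $\phi_0^2/\rho=c^2\rho$, which is integrable, so $\phi_0\in H$ by a standard density argument.

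\textbf{Estimating each term.} Split $I_\omega(\phi_0)$ by region.
\begin{enumerate}
\item Kinetic part: $\frac12\int\rho\phi_\rho^2$ contributes $\frac{c^2}{2}\int_0^1\rho\,d\rho+\frac{c^2}{2}\int_{P-1}^P\rho\,d\rho=\frac{c^2}{4}+\frac{c^2}{2}\left(P-\tfrac12\right)$.
\item Centrifugal part: $\frac{N^2}{2}\int\phi^2/\rho$ contributes $\frac{N^2c^2}{4}$ on $[0,1]$, then $\frac{N^2c^2}{2}\log(P-1)$ on the plateau, then at most $\frac{N^2c^2}{2}$ on the outer ramp. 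Using $\log(P-1)\le P$, this part is bounded by a constant plus a multiple of $P$.
\item Potential on the plateau: $\lambda\int_1^{P-1}\rho h(c)\,d\rho=-\frac{\lambda h_0}{2}\left((P-1)^2-1\right)=-\frac{\lambda h_0}{2}(P^2-2P)$.
\item Potential on the ramps: $|h|$ is bounded on $[0,c]$ by some $M$, so these regions contribute at most $\lambda M\left(\tfrac12+P\right)$.
\end{enumerate}

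\textbf{Collecting terms.} Adding these up gives
\begin{align}
I_\omega(\phi_0)\le-\mathcal{A}P^2+\mathcal{B}P+\mathcal{C},
\end{align}
with
\begin{align}
\mathcal{A}=\frac{\lambda h_0}{2},\qquad \mathcal{B}=\lambda h_0+\frac{c^2}{2}(1+N^2)+\lambda M,
\end{align}
and $\mathcal{C}>0$ collecting the remaining constants. The right-hand side is negative exactly when $P>\left(\mathcal{B}+\sqrt{\mathcal{B}^2+4\mathcal{A}\mathcal{C}}\right)/(2\mathcal{A})=P^*$. Enlarging $\mathcal{C}$ if needed also guarantees $P^*>2$.

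\textbf{Main obstacle.} The delicate point is the centrifugal term $N^2\phi^2/\rho$, which grows logarithmically across the plateau. I would absorb it crudely via $\log P\le P$, so that it is linear in $P$ and fits the quadratic form. A careful but routine step is the bookkeeping that keeps $\mathcal{A},\mathcal{B},\mathcal{C}$ explicit and positive. I would also check that $\beta\le0$ causes no sign issues: it only enlarges $h_0$, and $M$ is still finite.
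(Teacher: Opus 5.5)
Your proof is correct and follows the paper's route. You use the same trapezoidal test function with plateau level $c^2=t^2=a/2$, so the bulk term $-\frac{\lambda h_0}{2}P^2$ (identical to the paper's $\mathcal{A}$) dominates kinetic and centrifugal costs that grow at most linearly or logarithmically in $P$. Your bookkeeping is actually cleaner than the paper's: bounding the ramp potentials crudely and absorbing $\log(P-1)\le P$ into the linear coefficient leaves a genuine constant $\mathcal{C}$, so the positive root is exactly the stated $P^*$. By contrast, the paper keeps a $\mathcal{C}\ln P$ term and ends with the different threshold $(\mathcal{B}+\mathcal{C})/\mathcal{A}$. Also, $P^*>\mathcal{B}/\mathcal{A}>2$ holds automatically since $\mathcal{B}>\lambda h_0=2\mathcal{A}$, so you never need to enlarge $\mathcal{C}$.
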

	
	\begin{proof}
		We construct a test function $\phi_0$ to verify the condition $I_{\omega}(\phi_0) < 0$. Let $t > 0$ be a parameter to be fixed later and define the trapezoidal function:
		\begin{align}
			\phi_0(\rho)=\left\{
			\begin{array}{cc}
				t\rho, &0\leq \rho < 1,\\
				t, &1\leq \rho \leq P-1,\\
				t(P-\rho), &P-1 < \rho\leq P.
			\end{array}\right.
		\end{align}
		First, we verify that $\phi_0 \in H$. The function $\phi_0$ is continuous on $[0, P]$ as the piecewise definitions agree at the transition points $\rho=1$ and $\rho=P-1$. Furthermore, $\phi_0(0)=0$ and $\phi_0(P)=0$, satisfying the boundary conditions. The weak derivative $\phi_{0,\rho}$ exists and is piecewise constant (taking values $t, 0, -t$), hence $\phi_{0,\rho} \in L^2(0, P)$. Thus, $\phi_0 \in H$.
		
		We now compute the action $I_{\omega}(\phi_0)$. The potential terms are dominated by the plateau region $[1, P-1]$. Integrating exactly, we find:
		\begin{align}
			\int_0^P \rho \phi_0^2 d\rho &= t^2\left(\frac{P^2}{2} - P + \frac{1}{3}\right), \\
			\int_0^P \rho \phi_0^4 d\rho &= t^4\left(\frac{P^2}{2} - P + \frac{1}{5}\right), \\
			\int_0^P \rho \phi_0^6 d\rho &= t^6\left(\frac{P^2}{2} - P + \frac{1}{7}\right).
		\end{align}
		The gradient term evaluates to $\int_0^P \rho \phi_{0,\rho}^2 d\rho = t^2 P$. For the centrifugal term, we use the upper bound valid for $P \ge 2$:
		\begin{align}
			\int_0^P \frac{\phi_0^2}{\rho} d\rho \le \int_0^1 \rho t^2 d\rho + \int_1^P \frac{t^2}{\rho} d\rho = \frac{t^2}{2} + t^2 \ln P.
		\end{align}
		Substituting these into $I_{\omega}(\phi_0)$ and grouping by powers of $P$, we obtain:
		\begin{align}
			I_{\omega}(\phi_0) \le -\mathcal{A} P^2 + \mathcal{B} P + \mathcal{C} \ln P,
		\end{align}
		where the coefficient of the quadratic term is:
		\begin{align}
			\mathcal{A} = -\frac{\lambda}{2} \left[ t^6 - a t^4 + \left(b - \frac{\omega^2}{2\lambda}\right)t^2 \right].
		\end{align}
		To ensure $\mathcal{A} > 0$, we choose $t$ to minimize the polynomial in brackets. Setting $t^2 = a/2$, we get:
		\begin{align}
			\mathcal{A} = -\frac{\lambda (a/2)}{2} \left( \frac{a^2}{4} - \frac{a^2}{2} + b - \frac{\omega^2}{2\lambda} \right) = \frac{\lambda a}{4} \left( \frac{\omega^2}{2\lambda} - \left(b - \frac{a^2}{4}\right) \right).
		\end{align}
		By hypothesis \eqref{neccCond3}, $\mathcal{A} > 0$. The linear coefficient $\mathcal{B}$ collects the remaining positive terms (gradient and potential corrections):
		\begin{align}
			\mathcal{B} = \frac{t^2}{2} + \lambda \left[ t^6 - at^4 + \left(b - \frac{\omega^2}{2\lambda}\right)t^2 \right] + \lambda \left( \frac{a}{5}t^4 - \frac{1}{7}t^6 - \frac{1}{3}\left(b - \frac{\omega^2}{2\lambda}\right)t^2 \right),
		\end{align}
		and $\mathcal{C} = \frac{N^2 t^2}{2}$. For large $P$, the term $\mathcal{C} \ln P$ is sub-dominant to $\mathcal{B} P$. We can strictly bound $\ln P < P$ for $P>1$ to simplify the sufficient condition to a quadratic inequality:
		\begin{align}
			-\mathcal{A} P^2 + (\mathcal{B} + \mathcal{C}) P < 0 \implies P > \frac{\mathcal{B} + \mathcal{C}}{\mathcal{A}}.
		\end{align}
		Thus, for sufficiently large $P$, $I_{\omega}(\phi_0) < 0$.
		$\qquad\square$
	\end{proof}
	
	From Lemma 3.2, we conclude that $\eta_0 = \inf I_\omega(\phi) < 0$. Since $I_\omega(0) = 0$, the minimizer $\tilde{\phi}$ must be nontrivial.
	
	We now establish the existence of a second, distinct solution using the Mountain Pass Theorem. First, we show that the functional possesses the requisite geometry.
	
	\begin{lemma}
		There is a $\delta >0$ and $C>0$ such that for any $\phi\in H$ with $0<||\phi||<\delta$, we have $I_{\omega}(\phi)>0$ and $I_{\omega}(\phi)\geq C$ for $||\phi||=\delta$.
	\end{lemma}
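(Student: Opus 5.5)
The plan is to split $I_\omega$ into its quadratic part and the higher-order part, and to show the quadratic part controls the rest near the origin. Write
\begin{align*}
I_\omega(\phi)=\frac12\int_0^P\Big(\rho\phi_\rho^2+\frac{N^2}{\rho}\phi^2\Big)d\rho+\Big(\lambda b-\frac{\omega^2}{2}\Big)\int_0^P\rho\phi^2d\rho-\lambda a\int_0^P\rho\phi^4d\rho+\lambda\int_0^P\rho\phi^6d\rho .
\end{align*}
In the regime of Theorem 1.2 we have $\omega^2<2\lambda b$. Hence the second term is nonnegative. Since $|N|\ge1$, the first term is at least $\frac12\|\phi\|^2$. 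The sextic term is also nonnegative. This gives
\begin{align*}
I_\omega(\phi)\ge \frac12\|\phi\|^2-\lambda a\int_0^P\rho\phi^4d\rho .
\end{align*}
If one wants to allow $\omega^2\ge 2\lambda b$ as well, one could instead absorb $\frac{\omega^2}{2}\int\rho\phi^2$ using $\int\rho\phi^2\le P^2\int\phi^2/\rho$. That only works when $\omega^2P^2<N^2$, which is incompatible with large $P$. So I would state and use the hypothesis $\omega^2<2\lambda b$ explicitly, since it is the range in Theorem 1.2.

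The key step is to bound the quartic term by a power of the $H$-norm higher than two. As in the proof of Lemma 3.1, view $\phi$ as a radial function on $D_P$. Then $\int_0^P\rho\phi^4d\rho=\frac1{2\pi}\|\phi\|_{L^4(D_P)}^4$ and $\int_0^P\rho\phi_\rho^2d\rho=\frac1{2\pi}\|\nabla\phi\|^2_{L^2(D_P)}$. Lemma 3.1 already established that $H$ embeds continuously in $W^{1,2}_0(D_P)$. Combining this with the Sobolev embedding $W^{1,2}_0(D_P)\subset L^4(D_P)$ (or Ladyzhenskaya's inequality $\|u\|_4^4\le C\|u\|_2^2\|\nabla u\|_2^2$ together with $\|u\|_2^2\le 2\pi P^2\int\phi^2/\rho$) gives a constant $K=K(P)$ with
\begin{align*}
\int_0^P\rho\phi^4d\rho\le K\|\phi\|^4 .
\end{align*}
A more elementary route is also available. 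Write $\phi(\rho)^2=2\int_0^\rho\phi\phi_\rho$, so by Cauchy--Schwarz $\sup\phi^2\le 2\big(\int\phi^2/\rho\big)^{1/2}\big(\int\rho\phi_\rho^2\big)^{1/2}\le\|\phi\|^2$. Then $\int\rho\phi^4\le \sup\phi^2\int\rho\phi^2\le P^2\|\phi\|^4$. I would use this one-dimensional argument because it is self-contained and gives the explicit choice $K=P^2$.

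With this bound, $I_\omega(\phi)\ge\|\phi\|^2\big(\frac12-\lambda aK\|\phi\|^2\big)$. Choose $\delta^2=\frac1{4\lambda aK}$. Then for $0<\|\phi\|\le\delta$ we get $I_\omega(\phi)\ge\frac14\|\phi\|^2>0$, and on the sphere $\|\phi\|=\delta$ we get $I_\omega\ge C:=\delta^2/4$.

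The main obstacle is the sign of the $\phi^2$ coefficient, that is, making sure the hypothesis $\omega^2<2\lambda b$ (or an equivalent absorption condition) is explicitly in force. The remaining potential subtlety is justifying the sup bound for elements of the completion $H$ rather than for $C^1$ functions. I would handle that by proving it on $X$ and passing to the limit by density, which is valid because both sides are continuous in the $H$-norm.
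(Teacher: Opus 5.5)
Your proof is correct. You are also right that $\omega^2<2\lambda b$ must be assumed: the paper's proof invokes it even though the lemma's statement omits it. Where you differ from the paper is in how the quartic term is absorbed.

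The paper never isolates $-a\phi^4$. It uses $\lim_{\xi\to0}g(\xi)/\xi^2=b-\frac{\omega^2}{2\lambda}>0$ to pick $\xi_0$ with $g(\xi)\ge\frac12\bigl(b-\frac{\omega^2}{2\lambda}\bigr)\xi^2$ for $|\xi|\le\xi_0$. It then uses the $L^\infty$ bound to force $|\phi|\le\xi_0$ pointwise once $\|\phi\|\le\delta=\xi_0/\sqrt2$. (The paper states this bound as $\|\phi\|_\infty\le\sqrt2\|\phi\|$; your $\sup\phi^2\le\|\phi\|^2$ is a sharper form of the same estimate.) The whole potential integral is then nonnegative, which gives $I_\omega(\phi)\ge\frac12\|\phi\|^2$ and $C=\delta^2/2$.

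So the paper lets the strictly positive mass coefficient dominate the quartic term pointwise. You instead discard the mass term and let the $H$-norm dominate the quartic term through the superquadratic bound $\int_0^P\rho\phi^4\,d\rho\le P^2\|\phi\|^4$. That bound needs the extra weighted inequality $\int_0^P\rho\phi^2\,d\rho\le P^2\int_0^P\phi^2\rho^{-1}\,d\rho$.

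The trade-offs are as follows. The paper's $\delta$ and $C$ depend only on $a,b,\lambda,\omega$, not on $P$, but its argument genuinely needs the strict inequality $\omega^2<2\lambda b$. Your constants $\delta^2=1/(4\lambda aP^2)$ and $C=\delta^2/4$ are fully explicit but degrade like $P^{-2}$. That is harmless in the Mountain Pass step at fixed $P$, where $\delta$ only has to be smaller than $\|\phi_0\|$. In exchange, your argument uses only nonnegativity of $\lambda b-\frac{\omega^2}{2}$, so it also covers $\omega^2=2\lambda b$.

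Your aside about absorption is too pessimistic. Absorb only the excess, $\bigl(\frac{\omega^2}{2}-\lambda b\bigr)\int_0^P\rho\phi^2\,d\rho\le\bigl(\frac{\omega^2}{2}-\lambda b\bigr)P^2\int_0^P\phi^2\rho^{-1}\,d\rho$, into the centrifugal term. This leaves a positive multiple of $\|\phi\|^2$ whenever $\omega^2<2\lambda b+N^2/P^2$, the same range as in Lemmas 2.2 and 2.3. Your quartic estimate then finishes the proof unchanged.
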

	\begin{proof}
		Rewrite the functional \eqref{Daction} in the form
		\begin{align}
			I_{\omega}(\phi)=\dfrac{1}{2}\int_0^{P}\left(\rho\phi_{\rho}^2+\dfrac{N^2}{\rho}\phi^2\right)d\rho+\lambda\int_0^P\rho g(\phi)d\rho
		\end{align}
		with
		\begin{align}
			g(\phi)=\phi^6-a\phi^4+\left(b-\dfrac{\omega^2}{2\lambda}\right)\phi^2.
		\end{align}
		For $\omega^2< 2\lambda b$, we have $\lim\limits_{\xi\rightarrow 0}\dfrac{g(\xi)}{\xi^2}=b-\dfrac{\omega^2}{2\lambda}> 0$.
		Thus, for any $\epsilon>0$ there is a $\xi_0>0$ such that
		\begin{equation}
			g(\xi)\geq \left(b-\dfrac{\omega^2}{2\lambda}-\epsilon\right)\xi^2\quad\text{for all $|\xi|<\xi_0$}.
		\end{equation}
		Let $\delta=\xi_0/\sqrt{2}>0$ and $\phi\in H$ such that $0<||\phi|| \le \delta$. Using the embedding $||\phi||_{\infty}\leq\sqrt{2}||\phi||$, we have $|\phi(\rho)| \le \xi_0$.
		Taking $\epsilon=\frac{1}{2}\left(b-\dfrac{\omega^2}{2\lambda}\right)$, we get
		\begin{align}
			I_{\omega}(\phi) \geq \dfrac{1}{2}||\phi||^2+\dfrac{1}{2}\lambda\left(b-\dfrac{\omega^2}{2\lambda}\right)\int_0^P\rho\phi^2d\rho \geq \dfrac{1}{2}||\phi||^2.
		\end{align}
		Thus, for $||\phi|| = \delta$, we have $I_{\omega}(\phi) \geq \frac{1}{2}\delta^2 := C > 0$.
		$\qquad\square$
	\end{proof}
	
	From Lemma 3.2, there exists $\phi_0 \in H$ such that $I_\omega(\phi_0) < 0$. We can choose $\delta$ small enough such that $||\phi_0|| > \delta$. Since $I_\omega(0)=0$, $I_\omega(\phi) \ge C > 0$ on the sphere $||\phi||=\delta$, and $I_\omega(\phi_0) < 0$, the functional exhibits the Mountain Pass geometry.
	
	Define the family of paths $\Gamma = \{ \gamma \in \mathcal{C}([0,1]; H) : \gamma(0)=0, \gamma(1)=\phi_0 \}$. The Mountain Pass level is defined as:
	\begin{align}
		c = \inf_{\gamma \in \Gamma} \max_{t \in [0,1]} I_\omega(\gamma(t)).
	\end{align}
	Since every path $\gamma$ from $0$ to $\phi_0$ must intersect the sphere $||\phi||=\delta$, we have:
	\begin{align}
		c \geq \inf_{||\phi||=\delta} I_\omega(\phi) \geq C > 0.
	\end{align}
	By the Mountain Pass Theorem \cite{Jabri, PR}, since $I_\omega$ satisfies the PS condition, there exists a critical point $\phi_{MP} \in H$ such that $I_\omega(\phi_{MP}) = c$ and $I'_\omega(\phi_{MP}) = 0$.
	
We conclude that we have found at least two distinct critical points for the functional $I_\omega$. To ensure these solutions are physically relevant (positive), we employ a truncation argument. Define the modified nonlinearity $h^+(\phi)$ by:
\begin{equation}
    h^+(\phi) = \begin{cases} 
    \phi^6 - a\phi^4 + (b - \frac{\omega^2}{2\lambda})\phi^2 & \text{if } \phi \geq 0, \\
    0 & \text{if } \phi < 0.
    \end{cases}
\end{equation}
Let $I_\omega^+$ denote the action functional associated with this truncated potential. It is straightforward to verify that $I_\omega^+$ satisfies the Palais-Smale condition and possesses the same Mountain Pass geometry as the original functional. Consequently, there exist a ground state $\tilde{\phi}$ (minimizer) and an excited state $\phi_{MP}$ (saddle point) for $I_\omega^+$.

To prove non-negativity, let $\phi$ be a critical point of $I_\omega^+$ and consider the test function $\phi^- = \min(0, \phi)$. Testing the Euler-Lagrange equation with $\phi^-$ yields:
\begin{align}
    \langle (I_\omega^+)'(\phi), \phi^- \rangle &= \int_0^P \rho \left( \phi_\rho \phi^-_\rho + \frac{N^2}{\rho^2}\phi\phi^- - \omega^2 \phi \phi^- \right) d\rho + \lambda \int_0^P \rho h^+(\phi) \phi^- d\rho.
\end{align}
Using the properties $\phi \phi^- = (\phi^-)^2$, $\phi_\rho \phi^-_\rho = (\phi^-_\rho)^2$, and the fact that $h^+(\phi)\phi^- \equiv 0$ by definition, this reduces to:
\begin{align}
    \int_0^P \rho (\phi^-_\rho)^2 d\rho + \int_0^P \rho \left( \frac{N^2}{\rho^2} - \omega^2 \right) (\phi^-)^2 d\rho = 0.
\end{align}
Since $\frac{N^2}{\rho^2} > \frac{N^2}{P^2}$ and we consider $\omega^2$ within the spectral gap, the term $(\frac{N^2}{\rho^2} - \omega^2)$ is bounded from below. The coercivity of the differential operator implies that $||\phi^-|| = 0$, and thus $\phi(\rho) \geq 0$ almost everywhere. 

Finally, by the Strong Maximum Principle applied to the operator $\mathcal{L} = -\Delta + (\frac{N^2}{\rho^2} - \omega^2)$, any non-trivial non-negative solution must be strictly positive in $(0, P)$. Since $\phi(\rho) > 0$, we have $h^+(\phi) \equiv h(\phi)$, and thus these positive critical points are genuine solutions to the original problem \eqref{DE}.
	
	\section{Proof of Theorem 1.3}
	\setcounter{equation}{0}
	Consider the action functional $I:W_0^{1,2}(0,P)\rightarrow\mathbb{R}$
	\begin{align}
		I(\phi)=\int_0^{P}\left\{\dfrac{1}{2}\left(\rho\phi_{\rho}^2+\dfrac{N^2}{\rho}\phi^2\right)+\lambda\rho(\phi^6-a\phi^4+b\phi^2)\right\}d\rho\label{action}
	\end{align}
	and the reduced norm functional $\tilde{Q}:W_0^{1,2}(0,P)\rightarrow\mathbb{R}$,
	\begin{align}
		\tilde{Q}(\phi)=4\pi\int_0^{P}\rho\phi^2d\rho.\label{constraint}
	\end{align}
	The field equation \eqref{DE} may be treated as a nonlinear eigenvalue problem such that $\omega^2$ appears as the Lagrange multiplier of the constrained minimization problem
	\begin{align}
		I_0(\tilde{Q}_0)=\inf\{I(\phi) : \phi \in H \text{ and } \tilde{Q}(\phi)=\tilde{Q}_0\}, \label{minimization}
	\end{align}
	where $\tilde{Q}_0 > 0$ is a fixed, prescribed value for the functional.
	
	Using the inequality $\phi^6-a\phi^4+b\phi^2\geq\phi^2(b-a^2/4)$, valid for all $\phi\in\mathbb{R}$, and the constraint $\tilde{Q}(\phi)=\tilde{Q}_0$, we get the coercive bound
	\begin{align}
		I(\phi)\geq \dfrac{1}{2}\int_0^{P}\left(\rho\phi_{\rho}^2+\dfrac{N^2}{\rho}\phi^2\right)d\rho+\dfrac{\lambda }{4\pi}\left(b-\dfrac{a^2}{4}\right)\tilde{Q}_0.\label{coercive2}
	\end{align}
	Since $b > a^2/4$, the functional is bounded below and the constrained minimization problem \eqref{minimization} is well defined. Let $\{\phi_j\}_{j=1}^{\infty}$ be a minimizing sequence of \eqref{minimization}. From the coercive inequality \eqref{coercive2}, there is a constant $C>0$ depending on $a$, $b$, $\lambda$, and $I_0$, but independent of $j$, such that
	\begin{align}
		\int_0^{P}\left(\rho\phi_{j,\rho}^2+\dfrac{N^2}{\rho}\phi_j^2\right)d\rho\leq C,\label{Cbnd}
	\end{align}
	with $\phi_{j,\rho}=\frac{d}{d\rho}\phi_j$. Since both functionals $I$ and $\tilde{Q}$ are even, and recalling that the distributional derivative satisfies $|\frac{d}{d\rho}|\phi(\rho)||\leq|\frac{d}{d\rho}\phi(\rho)|$ for any real-valued function defined on $W^{1,2}(0,P)$ \cite{LiebLoss}, we get $I(|\phi_j|)\leq I(\phi_j)$ and $\tilde{Q}(|\phi_j|)=\tilde{Q}(\phi_j)$. Hence, we may construct a new sequence, also labeled as $\{\phi_j\}_{j=1}^{\infty}$, of non-negative valued functions.
	
	We treat each $\phi_j$ as a radially symmetric function defined over $D_P=\{(x,y)\in\mathbb{R}^2:x^2+y^2\leq P^2\}$ vanishing on the boundary. Under the radially symmetric reduced norm $||\cdot||:W^{1,2}_0(D_P)\rightarrow\mathbb{R}$ given by
	\begin{align}
		||\phi||^2=\int_0^P\left(\rho\phi_{\rho}^2+\rho\phi^2\right)d\rho,
	\end{align}
	and the inequality 
    \begin{align}
    \int_0^P\rho\phi^2d\rho\leq P^2\int_0^P\frac{\phi^2}{\rho}d\rho,
	\end{align}
	we deduce that $\{\phi_j\}_{j=1}^{\infty}$ is bounded in $W^{1,2}_0(D_P)$. Without loss of generality, we suppose that $\phi_j\rightharpoonup \phi_0$ weakly in $W^{1,2}_0(D_P)$ as $j\rightarrow\infty$.
	The compact embedding $W^{1,2}(D_P)\subset\subset L^p(D_P)$ $(p\geq 1)$ gives the strong convergence of $\phi_j\rightarrow\phi_0$ in $L^p(D_P)$ as $j\rightarrow\infty$. Furthermore, for any $\epsilon \in (0, P)$, the sequence is bounded in $W^{1,2}(\epsilon, P)$, which embeds compactly into $\mathcal{C}[\epsilon, P]$. Thus, $\phi_j \to \phi_0$ uniformly on compact subsets away from the origin.
	
	To verify the constraint $\tilde{Q}(\phi_0) = \tilde{Q}_0$, we note that strong convergence in $L^2(D_P)$ implies
	\begin{align}
		\tilde{Q}(\phi_0) = 4\pi \int_0^P \rho \phi_0^2 d\rho = \lim_{j\to\infty} 4\pi \int_0^P \rho \phi_j^2 d\rho = \tilde{Q}_0.
	\end{align}
	The weak lower semicontinuity of the norm and the continuity of the potential terms under strong $L^p$ convergence imply
	\begin{align}
		I(\phi_0) \leq \liminf_{j\to\infty} I(\phi_j) = I_0.
	\end{align}
	Thus, $\phi_0$ is a minimizer. By the Lagrange Multiplier Theorem, there exists $\omega^2 \in \mathbb{R}$ such that $\phi_0$ satisfies the weak form of \eqref{DE}. Standard elliptic regularity theory implies that $\phi_0$ is a classical solution $\phi_0 \in \mathcal{C}^2(0, P) \cap \mathcal{C}[0, P]$.
	
	To prove positivity, suppose there exists $\rho_0 \in (0, P)$ such that $\phi_0(\rho_0) = 0$. Since $\phi_0 \geq 0$, this $\rho_0$ is a global minimum, implying $\phi_{0,\rho}(\rho_0) = 0$. By the uniqueness theorem for ODEs, this initial condition $\phi_0(\rho_0) = \phi'_{0}(\rho_0) = 0$ implies $\phi_0(\rho) \equiv 0$ for all $\rho$, which contradicts the constraint $\tilde{Q}(\phi_0) = \tilde{Q}_0 > 0$. Therefore, $\phi_0(\rho) > 0$ for all $\rho \in (0, P)$.
	
	Finally, we establish the lower bound on the reduced norm. Multiply \eqref{DE} by $\rho\phi(\rho)$ and integrate by parts on $(0,P)$. Using the equation satisfied by the minimizer $(\phi_0, \omega)$:
	\begin{align}
		0 &= \int_0^{P} \left( -\rho \phi_{\rho}^2 - \frac{N^2}{\rho}\phi^2 - \lambda\rho(6\phi^6 - 4a\phi^4 + 2b\phi^2) + \omega^2\rho\phi^2 \right) d\rho \nonumber \\
		&\leq -\int_0^{P} \rho \phi_{\rho}^2 d\rho - N^2\int_0^P\dfrac{\phi^2}{\rho}d\rho + 4a\lambda\int_0^P\rho\phi^4d\rho - (2\lambda b-\omega^2)\int_0^P\rho\phi^2d\rho.\label{rearrange1}
	\end{align}
	Using the Cauchy-Schwarz inequality and $\phi(0)=0$:
	\begin{align}
		\phi^2(\rho) = \int_0^{\rho} 2\phi(s)\phi_{s}(s)ds \leq 2\left(\int_0^P s\phi_{s}^2 ds\right)^{1/2}\left(\int_0^P \dfrac{\phi^2}{s} ds\right)^{1/2}.
	\end{align}
	Multiplying by $\rho \phi^2(\rho)$ and integrating, we obtain:
	\begin{align}
		\int_0^{P} \rho\phi^4 d\rho \leq \frac{\tilde{Q}_0}{2\pi} \left(\int_0^{P}\rho \phi_{\rho}^2 d\rho\right)^{1/2}\left(\int_0^{P}\dfrac{\phi^2}{\rho}d\rho\right)^{1/2}.
	\end{align}
	Applying Young's inequality $xy \leq \epsilon x^2 + \frac{y^2}{4\epsilon}$ with $x = (\int \rho \phi_\rho^2)^{1/2}$ and $y = \frac{\tilde{Q}_0}{2\pi} (\int \phi^2/\rho)^{1/2}$:
	\begin{align}
		\int_0^{P} \rho\phi^4 d\rho \leq \epsilon \int_0^{P}\rho \phi_{\rho}^2 d\rho + \frac{1}{4\epsilon} \left(\frac{\tilde{Q}_0^2}{4\pi^2}\int_0^{P}\dfrac{\phi^2}{\rho}d\rho\right).\label{eq411}
	\end{align}
	Substituting \eqref{eq411} into \eqref{rearrange1}, the positive gradient term has coefficient $(4a\lambda\epsilon - 1)$:
	\begin{align}
		0 \leq (4a\lambda\epsilon-1)\int_0^{P}\rho \phi^2_{\rho}d\rho + \left(\frac{4a\lambda \tilde{Q}_0^2}{16\epsilon \pi^2} - N^2\right)\int_0^{P}\dfrac{\phi^2}{\rho}d\rho - (2\lambda b-\omega^2) \frac{\tilde{Q}_0}{4\pi}.
	\end{align}
	We choose $\epsilon = \frac{1}{4a\lambda}$ to eliminate the gradient term. The coefficient of the integral $\int (\phi^2/\rho)$ becomes:
	\begin{align}
		\frac{4a\lambda \tilde{Q}_0^2}{16 \left(\frac{1}{4a\lambda}\right) \pi^2} - N^2 = \frac{16 a^2 \lambda^2 \tilde{Q}_0^2}{16 \pi^2} - N^2 = \frac{a^2 \lambda^2 \tilde{Q}_0^2}{\pi^2} - N^2.
	\end{align}
	Thus, the inequality becomes:
	\begin{align}
		0 \leq \left(\frac{a^2\lambda^2 \tilde{Q}_0^2}{\pi^2} - N^2\right) \int_0^{P}\dfrac{\phi^2}{\rho}d\rho - (2\lambda b-\omega^2) \frac{\tilde{Q}_0}{4\pi}.
	\end{align}
	For a solution to exist with $\omega^2 < 2\lambda b$, the last term is strictly negative. Therefore, for the inequality to hold, the first term must be strictly positive:
	\begin{align}
		\frac{a^2\lambda^2 \tilde{Q}_0^2}{\pi^2} - N^2 > 0 \implies \tilde{Q}_0 > \frac{\pi |N|}{a\lambda}.
	\end{align}
	
	
\section{Numerical Results: Profiles of fundamental $Q$-vortices}
\setcounter{equation}{0}
The constrained minimization method of Section 4 enables the spectral-Galerkin computation of fundamental $Q$-vortex profiles for a prescribed reduced norm $\tilde{Q}_0$. Identical to the numerical work of Volkov and W\"ohnert \cite{VW}, we adopt the parameter set: $\lambda=1$, $a=2$, and $b=1.1$.

These parameters impose strict theoretical constraints on the existence and geometry of the solutions, which serve as benchmarks for our numerical results:
\begin{enumerate}
    \item \textit{Amplitude Bound:} Theorem 1.1 establishes a uniform upper bound on the field amplitude, predicting saturation at:
    \begin{equation}
        \phi^2(\rho) < \frac{2a}{3} \implies |\phi(\rho)| < \sqrt{\frac{4}{3}} \approx 1.1547.
    \end{equation}
    \item \textit{Frequency Existence Range:} Theorem 1.2 guarantees the existence of ground and excited state solutions for angular frequencies within the range:
    \begin{equation}
        \omega_{min}^2 := 2\lambda\left(b-\frac{a^2}{4}\right) < \omega^2 < 2\lambda b =: \omega_{max}^2.
    \end{equation}
    For our chosen parameters, these critical bounds are $\omega_{min}^2 = 0.2$ and $\omega_{max}^2 = 2.2$.
    \item \textit{Prescribed Norm Threshold:} Theorem 1.3 predicts that for solutions to exist below the upper frequency bound ($\omega^2 < 2.2$), the prescribed reduced norm must satisfy the lower bound:
    \begin{equation}
        \tilde{Q}_0 > \frac{\pi |N|}{a\lambda} = \frac{\pi |N|}{2} \approx 1.57 |N|.
    \end{equation}
\end{enumerate}

To determine the appropriate domain size for the numerical computation, we refer to the sufficient bound $P^*$ established in Lemma 3.2. For the chosen parameters, we calculate $P^* \approx 12.5$. To ensure our domain is sufficiently large to capture the exponential tail of the solution and minimize boundary effects, we conservatively select $P=20$ for all subsequent computations as in \cite{VW}.

Let $V_m$ be the finite-dimensional subspace spanned by the first $m$ orthonormal functions $\{\psi_j\}_{j=1}^m \subset W^{1,2}_0(0, P)$, constructed from the sine basis via a modified Gram-Schmidt process. We define the weighted inner product:
\begin{equation}
    (\psi, \tilde{\psi}) = 4\pi \int_0^P \rho \psi(\rho) \tilde{\psi}(\rho) \, d\rho,
\end{equation}
such that $(\psi_i, \psi_j) = \delta_{ij}$. This choice is motivated by the reduced norm functional $\tilde{Q}(\phi)$, allowing the constraint to be expressed simply as the sum of squared coefficients.

We approximate the solution as $\phi(\rho) \approx \sum_{j=1}^{m} a_j \psi_j(\rho)$, where $a \in \mathbb{R}^m$. Substituting this ansatz into the action functional $I(\phi)$ and exploiting the orthonormality of the basis, the potential term $\int \rho b \phi^2$ reduces to a constant proportional to $\tilde{Q}_0$. The infinite-dimensional problem reduces to minimizing the function $F: \mathbb{R}^m \to \mathbb{R}$:
\begin{align}
    F(a) &= \frac{1}{2} \sum_{i,j=1}^m a_i a_j \left( K_{ij} + N^2 C_{ij} \right) + \frac{\lambda b}{8\pi}\tilde{Q}_0 + \lambda \int_0^P \rho \left( \phi^6(\rho) - a\phi^4(\rho) \right) d\rho,
    \label{DiscreteFunctional}
\end{align}
subject to the constraint $\sum_{j=1}^m a_j^2 = \tilde{Q}_0$. Here, $K_{ij}$ and $C_{ij}$ are the pre-computed ``stiffness" and ``centrifugal" matrices, respectively:
\begin{equation}
    K_{ij} = \int_0^P \rho \psi_i'(\rho) \psi_j'(\rho) \, d\rho, \qquad C_{ij} = \int_0^P \frac{1}{\rho} \psi_i(\rho) \psi_j(\rho) \, d\rho.
\end{equation}
The minimization of \eqref{DiscreteFunctional} is performed using Python's \texttt{SciPy} optimization library. Once the minimizer $\phi$ is found, the nonlinear frequency shift $\omega^2$ (the Lagrange multiplier) is recovered by projecting the Euler-Lagrange equation onto the solution itself:
\begin{equation}
    \omega^2 = \frac{4\pi}{\tilde{Q}_0} \left( \int_0^P \rho \phi_\rho^2 \, d\rho + N^2 \int_0^P \frac{\phi^2}{\rho} \, d\rho + \int_0^P \rho \phi U'(\phi) \, d\rho \right),
    \label{OmegaEq}
\end{equation}
where $U'(\phi) = \lambda(6\phi^5 - 4a\phi^3 + 2b\phi)$.

To quantify the accuracy of our numerical approximation, we define the \textit{Residual Error} (RE) as the $L^2$ norm of the residual of the differential equation \eqref{DE}, normalized by the domain size,
\begin{equation}
	RE = \frac{1}{P} \left( \int_0^P \left( \phi_{\rho\rho} + \frac{1}{\rho}\phi_{\rho} - \frac{N^2}{\rho^2}\phi + \omega^2\phi - \lambda(6\phi^5 - 4a\phi^3 + 2b\phi) \right)^2 d\rho \right)^{1/2}.
	\label{ResidualError}
\end{equation}
Figure \ref{fig:fig1} displays the computed profiles for varying prescribed norm $\tilde{Q}_0$ with fixed $N=1$. The profiles exhibit a characteristic transition from Gaussian-like shapes at low norm to flat-top, saturated structures at high norm. This saturation behavior is predicted by the potential structure, and we observe that the maximum amplitude remains strictly bounded below the theoretical limit $\phi_{max} \approx 1.1547$ (dashed line), confirming the uniform bound derived in Theorem 1.1.

\begin{figure}
    \centering
    \includegraphics[width=0.8\textwidth]{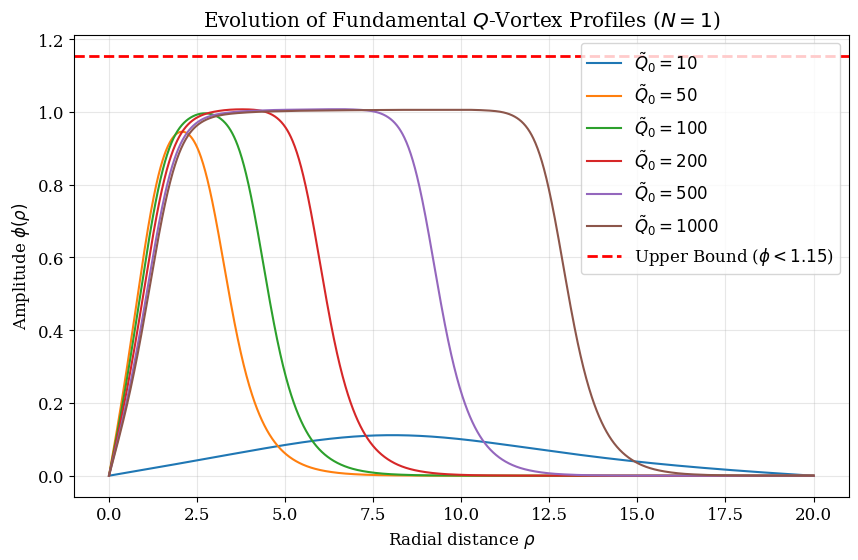}
    \caption{Evolution of the fundamental $Q$-vortex profile ($N=1$) with increasing prescribed norm $\tilde{Q}_0$. The amplitude saturates and flattens as $\tilde{Q}_0$ gets large, strictly respecting the theoretical upper bound $\phi < \sqrt{2a/3}$ (red dashed line).}
    \label{fig:fig1}
\end{figure}

To better visualize the spatial geometry and the saturation effect, Figure \ref{fig:compare3D} presents a side-by-side 3D reconstruction of the vortex for $\tilde{Q}_0=100$ and $\tilde{Q}_0=500$. For the moderate norm $\tilde{Q}_0=100$, the profile retains a smooth, rounded peak. In contrast, for the large norm $\tilde{Q}_0=500$, the profile clearly exhibits a ``flat-top" structure where the amplitude hits the theoretical ceiling and the vortex expands radially to accommodate the increased norm, rather than growing in height.

\begin{figure}[H]
    \centering
    \includegraphics[width=\textwidth]{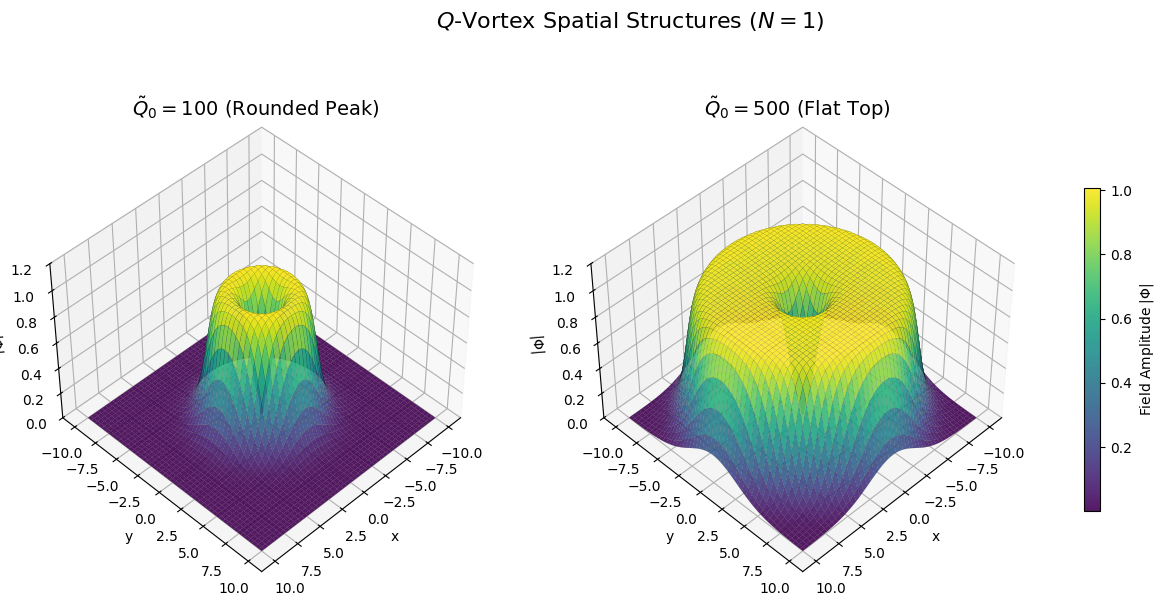}
    \caption{Comparison of the spatial structure of fundamental $Q$-vortices ($N=1$) for moderate ($\tilde{Q}_0=100$) and large ($\tilde{Q}_0=500$) prescribed norm. The left panel shows a rounded peak, while the right panel illustrates the ``flat-top" saturation predicted by Theorem 1.1, where the amplitude is capped and the soliton widens.}
    \label{fig:compare3D}
\end{figure}

Table \ref{table:results} presents the computed parameters for selected values of the prescribed norm, verifying that the residual errors remain small across the entire regime.

\begin{table}[H]
    \centering
    \caption{Computed parameters for fundamental $Q$-vortices ($N=1$).}
    \label{table:results}
    \begin{tabular}{|c|c|c|c|}
        \hline
        Norm $\tilde{Q}_0$ & Frequency $\omega^2$ & Max Amplitude $\phi_{max}$ & Residual Error (RE) \\
        \hline
        10.0   & 2.1755 & 0.1115 & $4.17 \times 10^{-4}$ \\
        50.0   & 0.5663 & 0.9458 & $3.55 \times 10^{-2}$ \\
        100.0  & 0.4287 & 0.9963 & $4.47 \times 10^{-2}$ \\
        200.0  & 0.3517 & 1.0073 & $3.35 \times 10^{-2}$ \\
        500.0  & 0.2904 & 1.0077 & $4.07 \times 10^{-2}$ \\
        1000.0 & 0.2618 & 1.0062 & $5.47 \times 10^{-2}$ \\
        \hline
    \end{tabular}
\end{table}

Figure \ref{fig:fig2} illustrates the nonlinear dispersion relation of the vortex, specifically the dependence of the eigenfrequency $\omega^2$ on the prescribed norm $\tilde{Q}_0$. Physically, $\tilde{Q}_0$ acts as the independent control parameter determining the soliton's ``mass" or geometric size. As we increase the prescribed value $\tilde{Q}_0$, the system responds by lowering its eigenfrequency $\omega^2$, decreasing monotonically from the linear limit ($\omega^2_{max}=2.2$) toward the critical lower bound defined in (5.2). This asymptotic behavior implies that stable $Q$-vortices can accommodate an arbitrarily large reduced norm, provided the frequency remains strictly above the cutoff $\omega_{min}^2 = 0.2$.

\begin{figure}[H]
    \centering
    \includegraphics[width=0.8\textwidth]{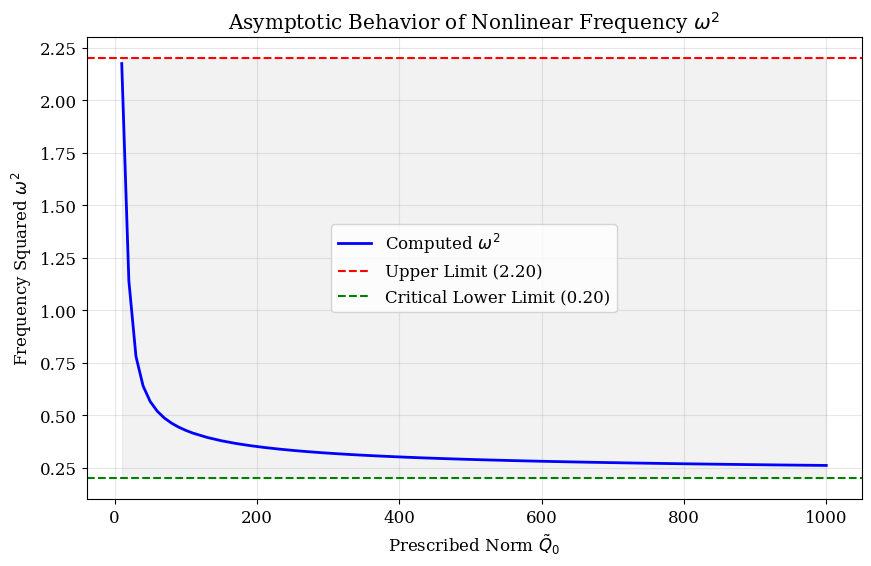}
    \caption{Asymptotic behavior of the nonlinear frequency shift $\omega^2$ as a function of prescribed norm $\tilde{Q}_0$. The frequency monotonically approaches the critical lower bound $\omega^2_{min}=0.2$ for arbitrarily large prescribed norm.}
    \label{fig:fig2}
\end{figure}

Finally, we investigate the effect of the rotational quantum number $N$ on the soliton properties. We fix the prescribed norm at $\tilde{Q}_0 = 100$ and vary $N$ from 1 to 5. As shown in Figure \ref{fig:fig3} and Table \ref{table:N_effects}, increasing $N$ pushes the vortex core outward. This effect is driven by the \textit{centrifugal barrier} term $N^2/\rho^2$ in the energy functional, which acts as a repulsive potential near the origin. To minimize energy, the field amplitude is forced to vanish at $\rho=0$ and the peak density is shifted to larger radii. Consequently, the maximum amplitude $\phi_{max}$ decreases significantly, while the eigenfrequency $\omega^2$ increases, moving closer to the upper spectral bound.

\begin{figure}[H]
    \centering
    \includegraphics[width=0.8\textwidth]{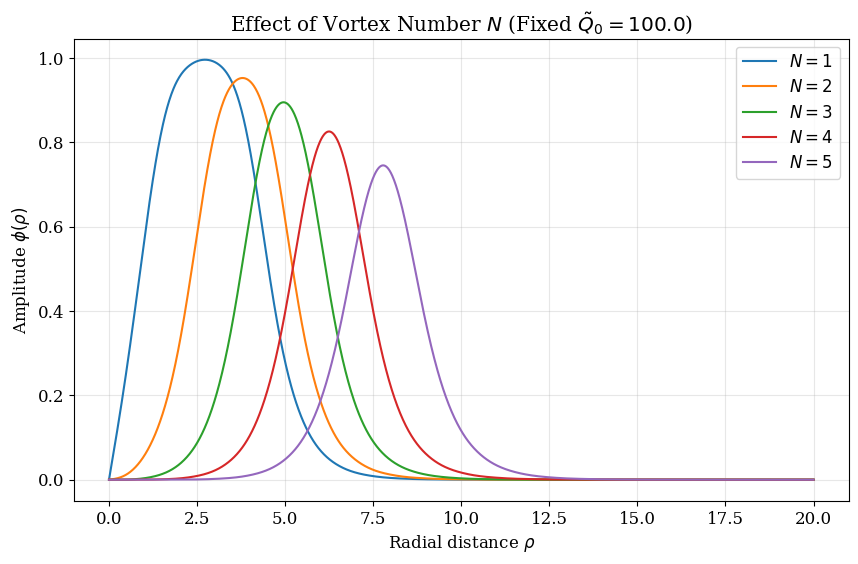}
    \caption{Effect of increasing vortex number $N$ on the soliton profile for fixed prescribed norm $\tilde{Q}_0=100$. The centrifugal barrier $N^2/\rho^2$ forces the wavefunction away from the origin, reducing the peak amplitude and increasing the effective radius.}
    \label{fig:fig3}
\end{figure}

\begin{table}[H]
    \centering
    \caption{Effect of vortex number $N$ on solution properties for fixed prescribed norm $\tilde{Q}_0 = 100$.}
    \label{table:N_effects}
    \begin{tabular}{|c|c|c|c|}
        \hline
        Vortex Number $N$ & Frequency $\omega^2$ & Max Amplitude $\phi_{max}$ & Residual Error (RE) \\
        \hline
        1 & 0.4287 & 0.9963 & $4.47 \times 10^{-2}$ \\
        2 & 0.5351 & 0.9530 & $8.10 \times 10^{-2}$ \\
        3 & 0.6657 & 0.8954 & $1.77 \times 10^{-2}$ \\
        4 & 0.8239 & 0.8261 & $3.33 \times 10^{-2}$ \\
        5 & 1.0145 & 0.7457 & $2.59 \times 10^{-2}$ \\
        \hline
    \end{tabular}
\end{table}

To confirm that these amplitude profiles correspond to topologically distinct states, we visualize the full complex field $\Phi = \phi(\rho)e^{iN\theta}$. Figure \ref{fig:phase3D} maps the phase angle $\arg(\Phi)$ onto the 3D surface of the amplitude for the $N=1$ and $N=2$ cases. The color winding visibly demonstrates the topological charge: for $N=1$, the phase cycles through the spectrum ($-\pi$ to $\pi$) exactly once around the ring, while for $N=2$, it cycles twice. This visualization intuitively explains the ``hole" at the core: the phase singularity at the origin forces the amplitude to vanish to maintain smoothness, and higher winding numbers ($N=2$) require a wider core region to accommodate the faster phase variation.

\begin{figure}[H]
    \centering
    \includegraphics[width=\textwidth]{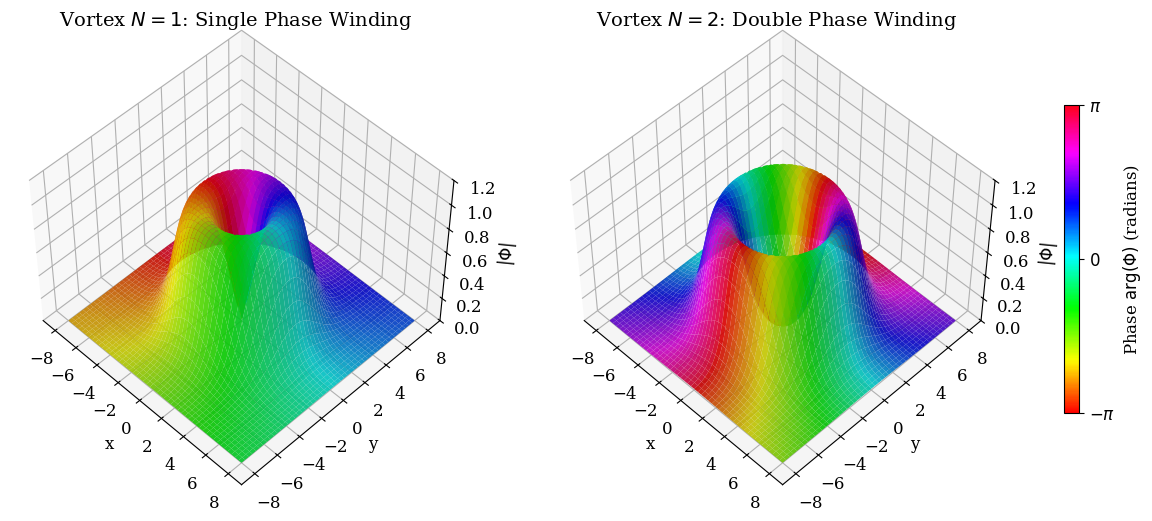}
    \caption{Phase-colored 3D visualization of the $Q$-vortex for $N=1$ (left) and $N=2$ (right). The height represents the field amplitude $|\Phi|$, while the color represents the phase angle $\arg(\Phi)$. The number of full color cycles around the ring visually confirms the topological winding number $N$. Note the wider core for $N=2$, consistent with the stronger centrifugal barrier.}
    \label{fig:phase3D}
\end{figure}

	
	\clearpage
    \begin{acknowledgements}
    We thank the anonymous referee for the careful reading of this manuscript and helpful feedback. This work does not have any conflicts of interest. There are no funders to report for this submission.  
\end{acknowledgements}
	\small{
	
}
	

\begin{thebibliography}{99}

		\bibitem{Almumin2024}
		Y. Almumin, J. Heeck, A. Rajaraman, C. B. Verhaaren, \textit{Slowly rotating Q-balls}, Eur. Phys. J. C \textbf{84}, 364 (2024).

		\bibitem{Ansari2024}
		A. Ansari, L. S. Bhandari, and A. M. Thalapillil, \textit{Q-balls in the sky}, Phys. Rev. D \textbf{109}, 023003 (2024).

		\bibitem{Benci3}
		V. Benci and D. Fortunato, \textit{Variational methods in nonlinear field equations: solitary waves, hylomorphic solitons, and vortices}, Springer monographs in mathematics, Springer, Cham (2014).
		
		\bibitem{Benci4}
		V. Benci and D. Fortunato, \textit{Spinning Q-balls for the Klein-Gordon-Maxwell equations}, Commun.
		Math. Phys. 295, 639–668 (2010).
		
		\bibitem{Benci5}
		V. Benci and D. Fortunato, \textit{On the existence of stable charged Q-balls}, J. Math. Phys. 52, (2011).
		
		\bibitem{Birse}
		M. C. Birse, \textit{Nontopological solitons}, Nucl. Phys. A \textbf{543} 337-348 (1992).
		
		\bibitem{Brihaye2024}
		Y. Brihaye and F. Buisseret, \textit{Q-balls and charged Q-balls in a two-scalar field theory with generalized Henon-Heiles potential}, Phys. Rev. D \textbf{109}, 076029 (2024).

		\bibitem{BHZ}
		Y. Brihaye, B. Hartmann, and W. Zakrzewski, \textit{Spinning solitons of a modified nonlinear Schr\"odinger equation}, Phys. Rev. D \textbf{69}, 087701 (2004).
		
		\bibitem{BCHL}
		Y. Brihaye, A. Cisterna, B. Hartmann, and G. Luchini, \textit{From topological to nontopological solitons: kinks, domain walls, and Q-balls in a scalar field model with a nontrivial vacuum manifold}, Phys. Rev. D \textbf{92}, 124061 (2015).
		
		\bibitem{Coleman}
		S. Coleman, \textit{Q-balls}, Nucl. Phys. B \textbf{262}, 263-283, (1985).

		\bibitem{Derrick1964}
		G. H. Derrick, \textit{Comments on nonlinear wave equations as models for elementary particles}, J. Math. Phys. \textbf{5}, 1252 (1964).
		
		\bibitem{DW}
		S. Dodelson, L. Widrow, \textit{Baryon symmetric baryogenesis}, Phys. Rev. Lett. \textbf{64}, 340–343, (1990).

		\bibitem{Hamada2024}
		Y. Hamada, K. Kawana, T. Kim, and P. Lu, \textit{Q-balls in the presence of attractive force}, J. High Energy Phys. \textbf{2024}, 242 (2024).
		
		\bibitem{Jabri}
		Y. Jabri, {\em The Mountain Pass Theorem, Variants, Generalizations and Some Applications}, Encyclopedia Math. Appl., Cambridge University Press, Cambridge, UK, 2003.
		
		\bibitem{KKK}
		Chanju Kim, Seyong Kim, and Yoonbai Kim, \textit{Global nontopological solitons}, Phys. Rev. D \textbf{47}, 5434-5443 (1993).

		\bibitem{GaugedCollisions2024}
		M. P. Kinach and M. W. Choptuik, \textit{Relativistic head-on collisions of U(1) gauged Q-balls}, Phys. Rev. D \textbf{110}, 015012 (2024).
		
		\bibitem{Kusenko}
		A. Kusenko, M. Shaposhnikov, \textit{Supersymmetric Q balls as dark matter}, Phys. Lett. B \textbf{418}
		46–54 (1998).
		
		\bibitem{LeePang}
		T. D. Lee and Y. Pang, \textit{Nontopological solitons}, Phys. Rep. \textbf{221}, 251-350 (1992).
		
		\bibitem{LiebLoss}
		E. H. Lieb and M. Loss, \emph{Analysis}, Amer. Math. Soc., Providence, 1997.


		\bibitem{Ivashkin2025}
        I. Ivashkin, E. Kim, E. Nugaev, and Y. Shnir, Towards spinning $U(1)$ gauged non-topological solitons in the model with Chern--Simons term, Phys. Lett. B \textbf{873}, 140229 (2026).

		\bibitem{Long2025}
		Y. Long, C. Ren, H. Chen, and H. Ge, \textit{Uncovering Hidden Spin of Scalar Fields with Higher-Order Derivative Lagrangian: On the Wave Spin in Drifted and Dissipative Fields}, Chin. Phys. Lett. \textbf{42}, 064301 (2025).

		\bibitem{Mandal2022}
		S. Mandal, \textit{Solitons in curved spacetime}, EPL (Europhysics Letters) \textbf{136}, 11001 (2022).

		\bibitem{Palais} R. S. Palais, \textit{The principle of symmetric criticality}, Commun. Math. Phys. \textbf{69}, 19–30 (1979).
		
		\bibitem{PR}
		P. H. Rabinowitz, \textit{Minimax methods in critical point theory with applications to differential equations}, American Mathematical Soc. No. 65, 1986.
		
		\bibitem{SuHan2025}
		G. Su and X. Han, \textit{Vortex Solutions for A Mixed Boundary-Value Problem in the Abelian-Higgs Model with A Neutral Scalar Field}, arXiv preprint arXiv:2511.06931 (2025).

		\bibitem{Tai2024}
		J.-S. B. Tai, \textit{Topological solitons in chiral liquid crystals}, Liq. Cryst. Today \textbf{32}, 45–62 (2024).

		\bibitem{VW} M. S. Volkov and E. W\"ohnert, \textit{Spinning $Q$-balls}, Phys. Rev. D \textbf{66}, 085003 (2002).
		
		\bibitem{Zhang2025}
		G.-D. Zhang, C.-H. Li, Q.-X. Xie, S.-Y. Zhou, \textit{Superradiance of Friedberg-Lee-Sirlin solitons}, Phys. Rev. D \textbf{111}, 103027 (2025).

        \bibitem{ChenSu2022}
        S. Chen and G. Su, Existence of vortices for Schr\"odinger equations with logarithmic and saturable nonlinearity, J. Math. Phys. 63, 101506 (2022).

        \bibitem{Guo2019}
        Q. Guo, D. Cao, and H. Li, Existence of optical vortices, Nonlinear Anal. Real World Appl. 50, 67 (2019).

        \bibitem{Greco2016}
        C. Greco, On the cubic and cubic-quintic optical vortices equations, J. Appl. Anal. 22, 95 (2016).

        \bibitem{Medina2017}
        L. Medina, On the existence of optical vortex solitons propagating in saturable nonlinear media, J. Math. Phys. 58, 011505 (2017).
        
        \bibitem{Medina2021}
        L. Medina, Localized optical vortex solitons in pair plasmas, J. Appl. Anal. 27, 1 (2021).

        \bibitem{Medina2023}
        L. Medina, Existence of Coupled Optical Vortex Solitons Propagating in a Quadratic Nonlinear Medium, Math. Meth. Appl. Sci. 46, 18547 (2023).

        \bibitem{YZ}
        Y. Yang and R. Zhang, \textit{Existence of optical vortices}, SIAM J. Math. Anal. \textbf{46} (2014) 484-498.
		
	\end{thebibliography}
\end{document}